\def\e#1{\emph{#1}}
\newcommand{\partitle}[1]{\vskip 0.5em\par\noindent{\textbf{#1.}\,\,}}
\def\P{\mathsf{P}}
\def\Q{\mathcal{Q}}
\def\V{\mathcal{V}}
\def\MO{\mathcal{P}}
\def\asn{\mathbin{{:}{=}}}
\def\nodes{\mathsf{V}}
\def\edges{\mathsf{E}}
\def\set#1{\mathord{\{#1\}}}
\def\sm{\setminus}
\def\minsep{\mathit{MinSep}}
\def\clqminsep{\mathit{ClqMinSep}}
\def\mintri{\mathit{MinTri}}
\def\maxcl{\mathit{MaxClq}}
\def\maxind{\mathit{MaxInd}}
\def\crosses{\mathbin{\natural}}
\def\bags{\mathrm{bags}}
\def\equivb{\mathrel{\equiv_{\mathsf{b}}}}
\def\sepset{\varphi}
\newcommand{\eat}[1]{}
\newtheorem{theorem}{Theorem}[section]
\newtheorem{proposition}[theorem]{Proposition}
\newtheorem{corollary}[theorem]{Corollary}
\newtheorem{lemma}[theorem]{Lemma}
\newtheorem{definition}{Definition}
\newenvironment{citedtheorem}[1]
{\begin{theorem}{\it\e{(#1)}}\,\,}
{\end{theorem}}
\newtheorem{examplethm}[theorem]{Example}
\newenvironment{repeatresult}[2]
{\vskip0.5em\par\textsc{#1} #2.\em}
{\vskip1em}
\def\G{\mathcal{G}}
\def\Av{A_{\mathsf{V}}}
\def\Ae{A_{\mathsf{E}}}
\newenvironment{qproof}
{\begin{proof}}
{\end{proof}}
\newcommand{\algname}[1]{{\sf #1}}
\def\myrulewidth{3.20in}
\def\therule{\rule{\myrulewidth}{0.2pt}}
\newenvironment{algseries}[2]
{\centering\begin{figure}[#1]\begin{center}\def\thecaption{\caption{#2}}
\begin{tabular}{p{\myrulewidth}}\therule\end{tabular}\vskip0.2em}
{\thecaption\end{center}\end{figure}}
\newenvironment{insidealg}[2]
{\normalsize\begin{insidecode}{#1}{#2}{Algorithm}}
{\end{insidecode}}
\newenvironment{insidecode}[3]
{
\begin{tabular}{p{\myrulewidth}}
\multicolumn{1}{c}{\rule{0mm}{3mm}{\bf #3} $\algname{#1}(\mbox{#2})$\vspace{-0.6em}}\\
\therule\vskip-0.8em\therule
\vspace{-1em}
\begin{algorithmic}[1]}
{\end{algorithmic}
\vskip-0.3em\therule
\end{tabular}}
\def\iter{\mathrm{iterator}}
\def\sat{\algname{saturate}}
\def\trihrs{\algname{Triangulate}}
\def\MSCM{\algname{MCS\_M}}
\def\LBTRI{\algname{LB\_TRIANG}}
\def\Extend{\algname{Extend}}
\def\sepsys{\mathsf{MSGraph}}
\def\ms{^{\mathsf{ms}}}
\def\time{\mathrm{time}}
\def\answers{\mathrm{out}}
\newcommand{\eqdef}{\overset{\mathrm{def}}{=\joinrel=}}
\newcommand{\SETH}{\mathsf{SETH}}
\newcommand{\SAT}{\mathsf{SAT}}
\newcommand{\kSAT}{k\text{-}\SAT}
\newcommand{\calG}{\mathcal{G}}
\newcommand{\var}{\mathsf{var}}
\newcommand{\poly}{\mathsf{poly}}
\newcommand{\true}{\mathsf{true}}
\newcommand{\false}{\mathsf{false}}
\def\addedb#1{{#1}}
\def\addedn#1{{\addedb{#1}}}
\begin{document}

\begin{frontmatter}

\title{Efficiently Enumerating Minimal Triangulations}

\author[technion]{Nofar Carmeli}
\ead{snofca@cs.technion.ac.il}

\author[technion]{Batya Kenig}
\ead{batyak@cs.technion.ac.il}

\author[technion]{Benny Kimelfeld}
\ead{bennyk@cs.technion.ac.il}

\author[tuwien]{Markus Kr\"oll}
\ead{kroell@dbai.tuwien.ac.at}

\address[technion]{Technion, Haifa, Israel}
\address[tuwien]{TU Wien, Vienna, Austria}

\begin{abstract}
  We present an algorithm that enumerates all the minimal
  triangulations of a graph in incremental polynomial time.
  Consequently, we get an algorithm for enumerating all the proper
  tree decompositions, in incremental polynomial time, where
  ``proper'' means that the tree decomposition cannot be improved by
  removing or splitting a bag.  The algorithm can incorporate any
  method for (ordinary, single result) triangulation or tree
  decomposition, and can serve as an anytime algorithm to improve such
  a method. We describe an extensive experimental study of an
  implementation on real data from different fields. Our experiments
  show that the algorithm improves upon central quality measures over
  the underlying tree decompositions, and is able to produce a large
  number of high-quality decompositions. 
\end{abstract} 
\begin{keyword}
Minimal triangulation\sep Tree decomposition\sep Enumeration algorithm\sep Minimal separators\sep Maximal independent sets\sep Maximal cliques
\end{keyword}

\end{frontmatter}

\section{Introduction}
Many intractable computational problems on graphs admit tractable
algorithms when applied to trees or forests. A \e{tree decomposition}
extracts a tree structure from a graph by grouping nodes into \e{bags}
(each treated as a single node). The corresponding operation on
hypergraphs is that of a \e{generalized hypertree
  decomposition}~\cite{DBLP:journals/jair/GottlobGS05} that consists
of a tree decomposition of the \e{primal} graph (which has the same
set of nodes, and an edge between every two hyperedge neighbors), and
an assignment of hyperedge labels (edge covers) to the tree
nodes~\cite{DBLP:conf/wg/GottlobGMSS05}.  Tree decompositions and
generalized hypertree decompositions have a plethora of applications,
including optimization of (multi)join queries in
databases~\cite{DBLP:conf/sigmod/TuR15,DBLP:journals/jair/GottlobGS05,DBLP:journals/corr/KalinskyEK16},
containment of database queries, constraint satisfaction
problems~\cite{DBLP:journals/jcss/KolaitisV00}, prediction of RNA
secondary structure~\cite{DBLP:conf/wabi/ZhaoMC06}, computation of
Nash equilibria in games~\cite{DBLP:journals/jair/GottlobGS05},
inference in probabilistic graphical models~\cite{LauSpi-JRS88}, and
weighted model counting~\cite{DBLP:conf/sum/KenigG15}.

Past research has focused on obtaining a ``good'' tree decompositions,
where goodness is typically defined as having low \e{tree width}~\cite{ROBERTSON198449}---the maximal
cardinality of a bag (minus one). Nevertheless, finding a tree
decomposition of the minimal tree width is
NP-hard~\cite{arnborg1987complexity}, as is the case for other common
measures of goodness for tree decompositions such as
\e{fill}~\cite{doi:10.1137/0602010}, and in the case of hypergraphs
\e{hypertree width}~\cite{GOTTLOB2002579}, \e{generalized
  hypertree width}~\cite{DBLP:journals/jacm/GottlobMS09}, and
\e{fractional hypertree width}~\cite{DBLP:journals/talg/Marx10}.
Therefore, heuristic algorithms are often
applied~\cite{Berry:2002:MCS:647683.732496,BERRY200633}.  The
different measures of goodness are motivated by the fact that the
needs of different applications are often different from (though
related to) the \eat{mere}width. Additional examples are the
complexity of weighted model counting, induced by a parameter
associated with the ``CNF-tree'' of the formula\eat{the ``CNF-tree''
  induced by the decomposition in weighted model
  counting}~\cite{DBLP:conf/sum/KenigG15,DBLP:conf/wg/GottlobGMSS05},
and the effectiveness of \e{adhesions} (parent-child intersection) for
caching in terms of dimension and
skew~\cite{DBLP:journals/corr/KalinskyEK16}. In fact, Kalinsky et
al.~\cite{DBLP:journals/corr/KalinskyEK16} have illustrated how, in
real-life scenarios, isomorphic tree decompositions of a minimal
width may result in orders-of-magnitude difference in (join)
performance.

The common approach is to devise a decomposition algorithm (exact,
approximate or heuristic) to capture the desired measure of goodness
per application. However, this is a nontrivial challenge that (to the
least) requires high expertise in algorithms and tree
decompositions. We propose an alternative approach---produce a large
number of different tree decompositions, using a baseline
decomposition method, and allow the application at hand to choose the
best according to its internal measure function.  Our approach brings
together results and techniques from the areas of
\e{chordal graphs} and \e{enumeration theory} in order to establish a
practical tool for enhancing decomposition algorithms and, by
implication, the performance of various inference and optimization
algorithms. Specifically, we explore the task of \e{enumerating} all
(or a subset of) the tree decompositions. Such algorithms have been
proposed in the past for small graphs (representing database queries),
without complexity guarantees~\cite{DBLP:conf/sigmod/TuR15}.  Our main
result is an enumeration algorithm that runs in \e{incremental polynomial time}~\cite{DBLP:journals/ipl/JohnsonP88}, that is,
the time between producing the $N$th result and the $(N+1)$st result is polynomial in $N$ and in the size of the input.

We first need to define which tree decompositions should be
enumerated, as many of them are effectively useless. For example, if
we take a graph that is already a tree, we do not wish to enumerate
the tree decompositions that group nodes with no reason; in fact, the
tree itself is the only reasonable decomposition in this
case. Therefore, we consider only tree decompositions that cannot be
``improved'' by removing or splitting a bag, and we call such tree
decompositions \e{proper}. We prove that the proper tree
decompositions are in a bijective (and efficiently computable)
correspondence to the \e{minimal triangulations} of the graph at hand,
defined a follows. A \e{triangulation} of a graph $g$ is a graph $g'$
that is obtained from $g$ by adding edges so that $g'$ is \e{chordal},
that is, $g'$ does not have any induced simple cycle of more than
three nodes.  A triangulation is \e{minimal} if no triangulation can
be obtained using only a strict subset of the added edges.  

So, the problem is reduced to the task of enumerating all of the
minimal triangulations of a graph. In this manuscript we devise an
algorithm for performing this task in incremental polynomial time.
Our approach is as follows. Parra and
Scheffler~\cite{DBLP:journals/dam/ParraS97} have shown that there is a
one-to-one correspondence between the minimal triangulations of a
graph $g$ and the maximal independent sets of a special graph
$\G$. The nodes of $\G$ are the so called \e{minimal separators of
  $g$}, and the edges are between \e{crossing} minimal
separators. (Precise definitions are in
Section~\ref{sec:preliminaries}.)  So, enumerating the minimal
triangulations of a graph boils down to enumerating these maximal
sets. It is well known that all the maximal independent sets of a
graph can be enumerated with polynomial
delay~\cite{DBLP:journals/ipl/JohnsonP88,DBLP:journals/jcss/CohenKS08}.
However, this is insufficient for us, since the graph $\G$ is not
given as input, and in fact, its number of nodes can be exponential in
the size of the original graph $g$. Therefore, we cannot construct
this graph ahead of time, and cannot directly use existing algorithms
to establish incremental polynomial time.

We address this problem by defining an abstraction of the graph $\G$
of minimal separators by means of a \e{Succinct Graph Representation}
(SGR), which is represented compactly by two algorithms: one for
enumerating the nodes and one for testing whether a given pair of
nodes forms an edge.  In particular, we can access the nodes of $\G$
through a polynomial-delay iterator, due to a result by Berry et
al.~\cite{conf/wg/BerryBC99} (who show how to enumerate the minimal
separators of a graph).  Applying previous results, we prove that the
SGR for the minimal separator graph (i.e., $\G$) meets certain
tractability conditions termed \e{tractable expansion}, which enable
the enumeration of its maximal independent sets (i.e., $g$'s minimal
triangulations) in incremental polynomial time in the size of the
representation (which can be logarithmic in the size of the graph
itself).

In summary, we reduce the problem of enumerating the proper tree
decompositions to that of enumerating the minimal triangulations,
which we reduce to the problem of enumerating the maximal independent
sets of an SGR with tractability properties, and we devise an
algorithm for the latter task. An important feature of the algorithm
is that it can incorporate any black-box procedure for expanding a
given independent set into a maximal one. When applied to enumerating
the proper tree decompositions, such a procedure can be any
off-the-shelf algorithm for minimal triangulation or tree
decomposition (e.g., Maximum Cardinality
Search~\cite{Berry:2002:MCS:647683.732496} and
LB-Triang~\cite{BERRY200633}). However, our algorithm executes this
procedure on different versions of the original graph, each time with
some new edges added. Hence, our algorithm has the potential of using
a high-quality decomposition algorithm for producing \e{many}
high-quality decompositions, enabling the user to choose the best one
generated according to the specific measures of her use case (may it
be width or anything else).

After establishing our algorithm, we describe an experimental study
where we have tested the ability of the algorithm to utilize the
aforementioned triangulation algorithms.  The experimental study
covers graphs of a wide range of domains (where tree decomposition is
needed for efficient analysis): join queries (from the TPC-H
collection), Bayesian networks, Markov Random Fields, grids, and
random graphs. We tested the execution time (delay) of the algorithm,
its ability to reduce the width or fill (number of edges added to
establish chordality), and the number of decompositions of quality
(width/fill) the same or better than that of the original
off-the-shelf algorithm. The results show that, indeed, our algorithm
can effectively enhance the quality of the corresponding decomposition
algorithm.

A short version of this manuscript was published in the 2017 Symposium on Principles of Database Systems (PODS). The current version includes all proofs omitted from the short version. In addition, we have added Section~\ref{sec:poly} and Section~\ref{sec:sgrpdelay}, where we discuss the possibility of reaching a polynomial delay algorithm solving our problem, and prove that the approach taken here cannot be used to reach such a time bound.

The rest of the manuscript is organized as follows. In
Section~\ref{sec:preliminaries} we give preliminary definitions and
notation, and recall basic results from the literature.  The SGR
framework is presented in Section~\ref{sec:SGR}, along with the
enumeration algorithm for maximal independent sets. In
Section~\ref{sec:enumMT} we prove that the graph of minimal separating
sets satisfies the tractability requirements needed for the SGR
enumeration algorithm, and thereby establish an algorithm for
enumerating the minimal triangulations. We show how this algorithm can
enumerate the proper tree decompositions in
Section~\ref{sec:proper}. Then, the experimental study is presented in
Section~\ref{sec:exper}, and we conclude in
Section~\ref{sec:conclusions}.

\section{Preliminaries}\label{sec:preliminaries}
In this section we give some basic notation and terminology that we
use throughout the paper. In addition, we recall some basic theory
that we need in this paper.

\subsection{Graphs}
The graphs in this work are undirected. For a graph $g$, the set of
nodes is denoted by $\nodes(g)$, and the set of edges (pairs
$\set{u,v}$ of distinct nodes) is denoted by $\edges(g)$.  Let $U$ be
a set of nodes of a graph $g$.  We denote by $g_{|U}$ the subgraph of
$g$ \e{induced} by $U$; that is, $\nodes(g_{|U})=U$ and
$\edges(g_{|U})=\set{\set{u,v}\in\edges(g)\mid \set{u,v}\subseteq U}$.
We denote by $g\sm U$ the graph obtained from $g$ by removing all the
nodes in $U$ (along with their incident edges), that is, the graph
$g_{|\nodes(g)\sm U}$. 

Let $g$ be a graph and $U$ a set of nodes of $g$. We say that $U$ is
an \e{independent set} if it does not contain both endpoints of any
edge, and it is a \e{maximal independent set} if it is an independent
set and it is not strictly contained in any other independent set. We
denote by $\maxind(g)$ the set of all the maximal independent sets of
$g$.  We say that $U$ is a \e{clique} (\e{of $g$}) if every two nodes of $U$ are
connected by an edge, and it is a \e{maximal clique} (\e{of $g$}) if it is a clique
that is not strictly contained in any other clique.  We denote by
$\maxcl(g)$ the set of all the maximal cliques of $g$. The operation
of \e{saturating} $U$ (\e{in $g$}) is that of connecting every
non-adjacent pair of nodes in $U$ by a new edge. Hence, if $h$ is
obtained from $g$ by saturating $U$, then $U$ is a clique of $h$.

\subsection{Minimal Separators}
Let $g$ be a graph, and let $S$ be a subset of $\nodes(g)$.  Let $u$
and $v$ be two nodes of $g$. We say that $S$ is a
\e{$(u,v)$-separator} if $u$ and $v$ belong to distinct connected
components of $g\setminus S$.  We say that $S$ is a \e{minimal}
$(u,v)$-separator if no strict subset of $S$ is a $(u,v)$-separator.
We say that $S$ is a \e{minimal separator} if there are two nodes $u$
and $v$ such that $S$ is a minimal $(u,v)$-separator.  We denote by
$\minsep(g)$ the set of all the minimal separators of $g$.  We mention
that the number of minimal separators (i.e., $|\minsep(g)|$) may be
exponential in the number of nodes (i.e., $|\nodes(g)|$).

Let $g$ be a graph, and let $S$ and $T$ be two minimal separators of
$g$. We say that $S$ \e{crosses} $T$, in notation $S\crosses_g T$, if
there are nodes $u$ and $v$ in $T$ such that $S$ is a
$(u,v)$-separator.  If $g$ is clear from the context, we may omit it
and write simply $S\crosses T$.  It is known that $\crosses$ is a
symmetric relation: if $S$ crosses $T$ then $T$ crosses
$S$~\cite{DBLP:journals/dam/ParraS97,DBLP:journals/tcs/KloksKS97}. Hence,
if $S\crosses T$ then we may also say that $S$ and $T$ are
\e{crossing}. When $S$ and $T$ are non-crossing, then we also say that
$S$ and $T$ are \e{parallel}.

\subsection{Chordality and Triangulation}

Let $g$ be a graph. A \e{cycle} of $g$ is a path that starts and ends
with the same node. A \e{chord} of a cycle $c$ of $g$ is an edge
$e\in\edges(g)$ that connects two nodes that are non-adjacent in
$c$. We say that $g$ is \e{chordal} if every cycle of length greater
than three has a chord. Whether a given graph is chordal can be
decided in linear time~\cite{Tarjan:1984:SLA:1169.1179}.
Dirac~\cite{Dirac1961} has shown a characterization of chordal graphs
by means of their minimal separators.

\begin{citedtheorem}{Dirac~\cite{Dirac1961}} \label{thm:Dirac}
  A graph $g$ is chordal if and only if every minimal separator of $g$
  is a clique.
\end{citedtheorem}

Rose~\cite{Rose1970597} has shown that a chordal graph $g$ has fewer
minimal separators than nodes (that is, if $g$ is chordal then
$|\minsep(g)|<|\nodes(g)|$), and Kumar and
Madhavan~\cite{Kumar1998155} have shown that we can find all of these
minimal separators in linear time.

\begin{citedtheorem}{Kumar and Madhavan~\cite{Kumar1998155}}\label{thm:minsep-chordal-ptime}
  Let $g$ be a chordal graph. The set $\minsep(g)$ can be computed in linear time.
\end{citedtheorem}

A \e{triangulation} of a graph $g$ is a graph $h$ such that
$\nodes(g)=\nodes(h)$, $\edges(g)\subseteq\edges(h)$, and $h$ is
chordal. The edges in $\edges(h)\setminus\edges(g)$ are commonly
referred to as \e{fill edges}.  A \e{minimal triangulation} of $g$ is
a triangulation $h$ of $g$ with the following property: for every
graph $h'$ with $\nodes(g)=\nodes(h')$, if
$\edges(g)\subseteq\edges(h')\subsetneq\edges(h)$ then $h'$ is
non-chordal (or in other words, $h'$ is not a triangulation of
$g$). In particular, if $g$ is already chordal then $g$ is the only
minimal triangulation of itself. We denote by $\mintri(g)$ the set of
all the minimal triangulations of $g$.

\subsection{Tree Decomposition}
Let $g$ be a graph. A \e{tree decomposition} $d$ of $g$ is a pair
$(t,\beta)$, where $t$ is a tree and
$\beta:\nodes(t)\rightarrow2^{\nodes(g)}$ is a function that maps
every node of $t$ into a set of nodes of $g$, so that all of the
following hold.
\begin{itemize}
\item Nodes are covered: for every node $u\in\nodes(g)$ there is a
  node $v\in\nodes(t)$ such that $u\in\beta(v)$.
\item Edges are covered: for every edge $e\in\edges(g)$ there is a
  node $v\in\nodes(t)$ such that $e\subseteq\beta(v)$.
\item \e{Junction-tree} (or \e{running-intersection}) property: for
  all nodes $u,v,w\in\nodes(t)$, if $v$ is on the path between $u$ and
  $w$, then $\beta(v)$ contains $\beta(u)\cap\beta(w)$.
\end{itemize}

Let $g$ be a graph, and let $d=(t,\beta)$ be a tree decomposition of
$g$. For a node $v$ of $t$, the set $\beta(v)$ is called a \e{bag} of
$d$. We denote by $\bags(d)$ the set $\set{\beta(v)\mid
  v\in\nodes(t)}$, and we denote by $\sat(g,d)$ the graph obtained from
$g$ by saturating (i.e., adding an edge between every pair of nodes
in) every bag of $d$.

Jordan~\cite{Jordan} shows the following characterization of chordal
graphs by means of tree decompositions.
\begin{citedtheorem}{Jordan~\cite{Jordan}}\label{theorem:Jordan}
  A graph $g$ is chordal if and only if it has a tree decomposition
  $d$ such that every bag of $d$ is a clique of $g$.
\end{citedtheorem}

\subsection{Enumeration}

An \e{enumeration problem} $\P$ is a collection of pairs $(x,Y)$ where
$x$ is an \e{input} and $Y$ is a finite set of \e{answers} for $x$,
denoted by $\P(x)$. A \e{solver} for an enumeration problem $\P$ is an
algorithm that, when given an input $x$, produces (or \e{prints}) a
sequence of answers such that every answer in $\P(x)$ is printed
precisely once.
A solver for an enumeration problem is also referred to as an \e{enumeration algorithm}.

Johnson, Papadimitriou and
Yannakakis~\cite{DBLP:journals/ipl/JohnsonP88} introduced several
different notions of \e{efficiency} for enumeration algorithms, and we
recall these now. Let $\P$ be an enumeration problem, and let $A$ be
solver for $\P$. We say that $A$ runs in:
\begin{itemize}
\item \e{polynomial total time} if the total execution time of $A$ is
  polynomial in $(|x|+|\P(x)|)$;
\item \e{polynomial delay} if the time between printing every two
  consecutive answers is polynomial in $|x|$;
\item \e{incremental polynomial time} if, after printing $N$ answers,
  the time to print the next $(N+1)$st answer is polynomial in
  $(|x|+N)$.  \footnote{The definition of Johnson et
    al.~\cite{DBLP:journals/ipl/JohnsonP88} requires the delay to be
    polynomial in the size of the input and the \e{size} of the
    previously produced results (not just their \e{number} $N$ as we
    define here). However, the definitions are equivalent when the
    size of each answer is polynomial in that of the input, as in our
    case.}
\end{itemize}
Observe that a solver that enumerates with polynomial delay also
enumerates with incremental polynomial time, which, in turn, implies
polynomial total time.

\eat{
Later on, we will use the following insight. Let $\P$ be an
enumeration problem, and let $A$ be a solver for $\P$.  For input $x$
and answer $y\in\P(x)$, we denote by $\time_{A,x}(y)$ the time in
which $y$ is printed.

\begin{proposition}
  Let $\P$ be an enumeration problem, and $A$ a solver for
  $\P$. Suppose that 
}

\subsection{Enumerating the Minimal Triangulations}\label{sec:poly}
A common approach to establish enumeration with polynomial delay is
via the technique known as the \e{branch-and-bound} (or the
\e{flashlight}) method~\cite{DBLP:conf/esa/BorosEG04}. In this
approach, we find a condition $\psi$ over the answers, and then
recursively enumerate all of the answers that satisfy $\psi$ and all
of the answers that violate $\psi$ (i.e., satisfy
$\psi'=\neg\psi$). Hence, in each recursive call, we need to enumerate
all the answers that satisfy a conjunction
$\psi_1\land\dots\land\psi_m$ of such conditions. For this approach to
guarantee polynomial delay, the depth of the recursion should be
bounded by a polynomial in the size of the input. Importantly, in each
recursive call, we should be able to test whether there is \e{at least
  one} answer that satisfies $\psi_1\land\dots\land\psi_m$. Then, in
the leaves, we should be able to produce the single answer that
satisfies the given constraints.

In this manuscript, we devise an algorithm for enumerating the minimal
triangulations: given $g$, enumerate $\mintri(g)$. A branch-and-bound
attempt to solve this problem would be, say, to apply the conditions
of inclusion and exclusion of fill edges. This approach amounts to
testing whether there is a minimal triangulation that contains a given
set of edges and excludes another given set of edges. Unfortunately,
it follows from known hardness results of Golumbick, Kaplan and
Shamir~\cite{GOLUMBIC1995449} that this problem is intractable.

\begin{proposition}
  The following decision problem is NP-complete: given $g$ and two
  sets $I$ and $X$ of node pairs, is there a minimal triangulation $h$
  of $g$ such that $I\subseteq\edges(h)$ and $\edges(h)\cap
  X=\emptyset$?
\end{proposition}
\begin{proof}
  Membership in NP is straightforward. To show hardness, we use a
  reduction from the \e{chordal sandwich problem}. For a graph
  property $\Pi$, the sandwich problem for $\Pi$ is that of
  determining, given graphs $g$ and $g''$ with $\nodes(g)=\nodes(g'')$
  and $\edges(g)\subseteq \edges(g'')$, where there exists a graph
  $g'$ such that $\nodes(g')=\nodes(g)$, $\edges(g)\subseteq
  \edges(g')\subseteq \edges(g'')$, and $g'$ satisfies
  $\Pi$. Golumbick et al.~\cite{GOLUMBIC1995449} proved the
  NP-hardness of this problem for various graph properties $\Pi$,
  including chordality. Now, given $g$ and $g''$, let $X$ be the set
  of all the node pairs that are \e{not} edges of $g''$. The existence
  of $g'$ in the chordal sandwich problem is equivalent to the
  existence of a (minimal) triangulation of $g$ that excludes $X$.
\end{proof}

Hence, we adopt a different approach to enumerating the minimal
triangulations, as we describe in the following sections.

\section{Enumerating Maximal Independent Sets on Succinct
  Graphs}\label{sec:SGR}

The main result of this paper is an algorithm for enumerating the
minimal triangulations of a graph $g$. As we explain in the next
section, this problem amounts to enumerating the maximal independent
sets of a graph $h$. It is known that the maximal independent sets of
a graph can be enumerated with polynomial
delay~\cite{DBLP:journals/ipl/JohnsonP88}. However, we cannot
instantiate $h$, since the number of nodes of $h$ can be exponential
in the size of $g$. Hence, known algorithms for enumerating maximal
independent sets cannot be applied to solve our problem. Nevertheless,
$h$ possesses some tractability properties that, in fact, allow us to
efficiently enumerate the maximal independent sets of $h$. In this
section we identify these properties within an abstract framework of
\e{succinct graph representations}, where a graph may be exponentially
larger than its representation, and we have access to the nodes and
edges through efficient algorithms. Mainly, we devise an algorithm for
enumerating the maximal independent sets for such graphs.

\subsection{Succinct Graph Representations}

We begin with the formal definition of a succinct graph
representation.
\begin{definition}[SGR]
  A \e{Succinct Graph Representation} (\e{SGR}) is a triple
  $(\G,\Av,\Ae)$, where:
\begin{itemize}
\item $\G$ is a function that maps every input $x$, referred to as an
  \e{instance}, to a graph $\G(x)$;
\item $\Av$ is an enumeration algorithm that, given an instance $x$,
  enumerates the nodes of $\G(x)$;
\item $\Ae$ is a decision algorithm that, given an instance $x$ and
  nodes $v$ and $u$ of $\G(x)$, determines whether $v$ and $u$ are
  connected by an edge in $\G(x)$.%
\end{itemize}
\end{definition}

An SGR $(\G,\Av,\Ae)$ is said to be \e{tractably accessible} if both
the following hold.
\begin{enumerate}
\item $\Av$ enumerates with polynomial delay.
\item $\Ae$ terminates in polynomial time.
\end{enumerate}
Here, both polynomials are with respect to $|x|$ (the length of $x$).
Observe that in a tractably accessible SGR, the (representation) size
of every node $v$ of $\G(x)$ is polynomial in that of $x$ (since
writing $v$ is within the polynomial delay).

 For efficient enumeration of $\maxind(\G(x))$, we need some more
 tractability conditions.
\begin{definition}[Tractable Expansion]\label{def:tractable-expansion}
  \hskip0.5em A tractably accessible SGR $(\G,\Av,\Ae)$ is said to
  have a \e{tractable expansion} if both of the following conditions
  hold.
\begin{enumerate}
\item There is a polynomial $p$ such that $|I|\leq p(|x|)$ for all
  instances $x$ and independent sets $I$ of $\G(x)$.
\item There is a polynomial-time algorithm that, given $x$ and an
  independent set $I$ of $\G(x)$, either determines that $I$ is
  maximal or returns a node $v\notin I$ such that
  $I\cup\set{v}$ is
  independent.
\end{enumerate}
\end{definition}

Following is an example
of an SGR that is central to this paper.

\subsubsection{The Separator Graph as an SGR}\label{sec:msgraph}

The \e{separator graph} of a graph $g$ is the graph that has the set
$\minsep(g)$ of minimal separators as its node set, and an edge
between every two minimal separators that are crossing (i.e.,
$S,T\in\minsep(g)$ such that $S\crosses T$).  Throughout this paper
we denote by $\sepsys$ the SGR $(\G\ms,\Av\ms,\Ae\ms)$, where:
\begin{itemize}
	\item $\G\ms$ is a function that maps the representation of a graph
	$g$ to its separator graph $\G\ms(g)$.
      \item $\Av\ms$ is an enumeration algorithm that, given a graph $g$,
        enumerates its set $\minsep(g)$ of minimal separators. We can
        use here a variation of the algorithm of Berry et
        al.~\cite{conf/wg/BerryBC99} that enumerates $\minsep(g)$ with
        polynomial delay, as we describe later in
        Section~\ref{sec:min-sep-enum}.
	\item $\Ae\ms$ is an algorithm that, given a graph $g$ and two minimal
	separators $S$ and $T$, determines whether $S\crosses T$ efficiently
	(e.g., by removing $S$ and testing whether $T$ is split along
	multiple connected components).
\end{itemize} 
In particular, $\sepsys$ is a tractably accessible SGR.

\subsection{Enumerating Maximal Independent Sets in
  SGRs}\label{sec:enumSGR}

Our main result for this section is the following.
\begin{theorem}\label{thm:sgr-inc-ptime}
  Let $(\G,\Av,\Ae)$ be a tractably accessible SGR with a tractable expansion. There is an algorithm that, given an
  instance $x$, enumerates the set $\maxind(\G(x))$ in incremental polynomial
  time.
\end{theorem}

The proof is via the algorithm \algname{EnumMIS} that is depicted in
Figure~\ref{alg:EnumMaxIndependent}.  \addedb{ This algorithm is an
  adaptation of the algorithm for computing full disjunctions in
  databases~\cite{DBLP:conf/vldb/CohenFKKS06} that generalizes the
  problem of enumerating maximal cliques (or maximal independent
  sets).  In turn, that algorithm was based on an improvement of the
  algorithm of Lawler et al.~\cite{DBLP:journals/siamcomp/LawlerLK80}
  for generating the maximal independent sets in polynomial total
  time, and all rely on the general idea of reducing the problem to
  the \e{input-restricted problem} that was later introduced by Cohen
  et al.~\cite{DBLP:journals/jcss/CohenKS08} for enumerating maximal
  node sets that satisfy a hereditary property.}

\addedb{
  The underlying idea of that algorithm is to construct a graph over
  the space of the solutions (maximal independent sets), and traverse
  the graph in a depth-first-search manner. In the case of maximal independent sets, there is an edge
  from $J$ to $K$ if $K$ is obtained from $J$ by adding a new node
  $v$, removing the neighbours of $v$, and greedily extending to a
  maximal independent set.  }

In this section, we describe the algorithm and prove its correctness
and efficiency. In the remainder of this section, we fix a tractably
accessible SGR $(\G,\Av,\Ae)$ with tractable expansion, and an input
instance $x$. Our goal is to enumerate $\maxind(\G(x))$.

\begin{algseries}{H}{Enumerating maximal independent sets of an input
    $x$ for an SGR $(\G,\Av,\Ae)$\label{alg:EnumMaxIndependent}}
\begin{insidealg}{EnumMIS}{$x$}
\STATE $J\asn\algname{Extend}(x,\emptyset)$
\STATE \textbf{print} $J$\label{algline:print1}
\STATE $\Q\asn\set{J}$
\STATE $\MO\asn\emptyset$
\STATE $\V\asn\emptyset$
\STATE $\iter\asn\Av(x)$
\WHILE{$\Q\neq\emptyset$} \label{algline:mainLoopStart}
\STATE $J\asn\Q.\algname{pop}()$ \label{algline:pop}
\STATE $\MO.\algname{push}(J)$ \label{algline:push}
\FORALL{$v\in\V$} \label{algline:nodeLoop1}
\STATE $J_v\asn\set{v}\cup\set{u\in J\mid \neg\Ae(x,v,u)}$ \label{algline:innerStart}
\STATE $K\asn\algname{Extend}(x,J_v)$ \label{algline:genMIS1}
\IF{$K\notin\Q\cup\MO$}  \label{algline:checkExistStart}
\STATE \textbf{print} $K$\label{algline:print2}
\STATE $\Q\asn\Q\cup\set{K}$ \label{algline:newTriangulation1}
\ENDIF   \label{algline:checkExistFinish}
\ENDFOR
\WHILE{$\Q=\emptyset$ and $\iter.\algname{hasNext}()$} \label{algline:hasNextLoop}
\STATE $v \asn {\iter.\algname{next}()}$ \label{algline:generateNode}
\STATE $\V\asn\V\cup\set{v}$  \label{algline:addtoV}
\FORALL{$J'\in \MO$} \label{algline:printedLoop}
\STATE $J'_v\asn\set{v}\cup\set{u\in J'\mid \neg\Ae(x,v,u)}$ \label{algline:exp2}
\STATE $K\asn\algname{Extend}(x,J'_v)$ \label{algline:genMIS}
\IF{$K\notin\Q\cup\MO$}  \label{algline:checkExistBeg}
\STATE \textbf{print} $K$\label{algline:print3}
\STATE $\Q\asn\Q\cup\set{K}$  \label{algline:newTriangulation2}
\ENDIF \label{algline:checkExistEnd}
\ENDFOR
\ENDWHILE
\ENDWHILE \label{algline:mainLoopEnd}
\end{insidealg}
\label{alg:EnumMIS}
\end{algseries}
 
\subsubsection{Algorithm Description}
As explained earlier, the algorithm extends every maximal independent
set $J$ that it generates in the direction of every node $v$ that it
generates. By \e{extending $J$ in the direction of $v$} we mean
producing an arbitrary maximal independent set $K$ that contains $v$
and all nodes in $J$ that are non-neighbors of $v$.
As long as there are unprocessed sets, they are extended in the
direction of all previously generated nodes. When no unprocessed sets
are left, additional nodes are generated, and the previously processed
sets are extended in the direction of the new nodes.  \addedb{Put
  differently, our algorithm adapts the traversal approach by
  restricting the steps to the solutions that are obtained by
  extending in the direction of the nodes $v$ that have been produced
  until that point of time; when a new node $v$ is generated, we
  revisit the past solutions and take the steps implied by $v$.
}

The algorithm maintains two collections, $\Q$ and $\MO$, for storing
answers (which are maximal independent sets of the graph $\G(x)$). The algorithm
inserts answers into $\Q$, and repeatedly \e{removes} (or \e{pops}) an
answer from $\Q$ and \e{processes} that answer (while possibly
inserting new answers into $\Q$), until $\Q$ is empty.  The set $\MO$
stores the answers that have already been removed from $\Q$ and
processed.  Importantly, both collections feature membership testing,
element removal and element insertion with a number of comparisons
logarithmic in their cardinality (i.e., the number of answers they
hold at the time of the operation). In addition, the algorithm
maintains a collection $\V$ of the nodes of $\G(x)$ generated thus
far.  The collection $\Q$ is initialized with a single result (which
is printed after being generated), which is an arbitrary maximal independent
set. This result is obtained through the procedure
$\algname{Extend}(x,I)$ that extends a given independent set $I$ into
a maximal one. Note that this procedure can be implemented in
polynomial time, since $(\G,\Av,\Ae)$ has a tractable expansion.  The
sets $\MO$ and $\V$ are initialized empty.

The algorithm accesses the nodes of $\G(x)$ through an iterator object
that is obtained by executing $\Av(x)$, and features two
polynomial-time operations:
\begin{itemize}
\item Boolean $\algname{hasNext}()$ determines whether there are
  additional nodes of $\G(x)$ to enumerate.
\item $\algname{next()}$ returns the next node in the iteration.
\end{itemize}

The algorithm applies the iteration of
line~\ref{algline:mainLoopStart} until $\Q$ becomes empty, and then terminates. In every iteration, the algorithm pops an element
from $\Q$, stores it in $\MO$
(lines~\ref{algline:pop}--\ref{algline:push}), and then processes it. The algorithm
iterates through the nodes in $\V$, and for each node $v$ it applies
(in lines~\ref{algline:innerStart}--\ref{algline:checkExistFinish})
what we call \e{extension of $J$ in the direction of $v$}: 
\begin{enumerate}
\item Generate the set $J_v$ that consists of $v$ and all the nodes in
  $J$ that are non-neighbors of $v$, using the algorithm $\Ae$ for
  testing adjacency;
\item Extend $J_v$ into an arbitrary maximal independent set $K$,
  again using $\algname{Extend}(x,J_v)$;
\item If $K$ is in neither $\Q$ nor $\MO$ (meaning it was not printed before), then print $K$ and add it
  to $\Q$.
\end{enumerate}
Observe that $J_v$ is an independent set, and therefore, it is possible to
invoke $\algname{Extend}(x,J_v)$ with $J_v$.

Up to this point, the algorithm is very similar to the algorithm of
Cohen et al.~\cite{DBLP:conf/vldb/CohenFKKS06} for computing full
disjunctions, except that $\V$ does not hold all nodes but only the
nodes generated so far. The twist (and the source of extra challenge
in proving correctness and efficiency) is in
lines~\ref{algline:hasNextLoop}--\ref{algline:mainLoopEnd}, where we
generate additional nodes and compensate for them being missing in the
previous iterations. In these lines, the algorithm tests whether it is
the case that $\Q$ is empty and the node iterator has additional nodes
to process (line~\ref{algline:hasNextLoop}). While this is the case,
the algorithm repeats the following procedure
(lines~\ref{algline:generateNode}--\ref{algline:checkExistEnd}):
generate the next node using the iterator of $\Av(x)$, add it to $\V$,
and extend \e{every} previously processed result (i.e., the results in
$\MO$) in the direction of the newly generated node $v$ (as previously
described).

\subsubsection{Correctness and Efficiency}\label{sec:sgr-correct}

The following lemma states the correctness of the algorithm: the
algorithm enumerates \e{every} element in $\maxind(\G(x))$, \e{only}
elements in $\maxind(\G(x))$, and every element is printed exactly
once.

\def\lemmasgrenumcorrect{
\algname{EnumMIS}$(x)$ enumerates $\maxind(\G(x))$.
}

\begin{lemma}\label{lemma:sgr-enum-correct} 
\lemmasgrenumcorrect
\end{lemma}

{\begin{qproof}
	The algorithm prints only elements that are created by invoking the
	procedure $\algname{Extend}$. Therefore, the algorithm prints only
	elements in $\maxind(\G(x))$. The tests of
	lines~\ref{algline:checkExistStart} and \ref{algline:checkExistBeg}
	ensure that whenever an element is printed, this element has not
	been seen before. Hence, no element is printed more than once. It is
	left to prove that every maximal independent set of $\G(x)$ is
	printed by the algorithm.
	
	Observe the following. When the algorithm terminates we have
	$\Q=\emptyset$. Therefore, in the previous iteration the loop of
	line~\ref{algline:hasNextLoop} could only have terminated due to
	$\iter.\algname{hasNext}()$ returning false. Therefore, upon
	termination $\V = \nodes(\G(g))$.
	
	Suppose, by way of contradiction, that there is some maximal
	independent set $H$ that is not printed by the algorithm. Let $J$ be
	a maximal independent set of $\G(x)$, among all the printed ones,
	that contains a maximal number of elements from $H$. The set $J$ must exist, since the algorithm prints at least one maximal independent set. Let $H_m$ be
	the intersection $H\cap J$. Since $H\neq
	H_m$ (or else $H$ is not maximal), there is at least one node in $H
	\setminus J$; let $v$ be such a node.
	
	At this point we have established that before the algorithm
	terminated, \e{(a)} the node $v$ has been generated, \e{and (b)} $J$
	has been printed. We now branch into two cases, as follows.
	\begin{enumerate}
        \item The set $J$ was inserted into $\MO$ before the node $v$
          was generated. Immediately after $v$ is generated (in line
          \ref{algline:generateNode}), the set $J_v=\set{v} \cup \{u
          \in J\mid \neg\Ae(x,v,u)\}$ will be constructed (in
          line~\ref{algline:exp2}) and expanded to a maximal
          independent set $K$ that contains $J_v$.
		\item The node $v$ was generated before $J$ was inserted into $\MO$.
		At the iteration when $J$ is inserted into $\MO$, we have
		$v \in \V$, and so the set $J_v=\set{v} \cup \{u \in J\mid
		\neg\Ae(x,v,u)\}$ will be constructed (in line~\ref{algline:innerStart}) and expanded to a maximal
		independent set $K$ that contains $J_v$.
	\end{enumerate}
	
	So, we have established that before the algorithm terminates, the set
	$J_v$ is generated and expanded to a maximal independent set $K$
	that contains $J_v$. Furthermore, $H_m \cup \set{v} \subseteq J_v$
	(since $H_m \subseteq J$, and does not contain any neighbor of $v$), and therefore $H_m
	\cup \set{v}\subseteq K$.  According to the algorithm, one of the
	following options must hold: (1) $K$ is inserted into $\Q$, (2) $K$ is
	already in $\Q$ (3) $K$ was in $\Q$ in the past and is now in
	$\MO$. Since the algorithm prints every maximal independent set that
	is inserted into $\Q$, we get a contradiction to the maximality of
	$H_m$.
\end{qproof}}

We now prove that the algorithm $\algname{EnumMIS}$ enumerates with
incremental polynomial time. We do so in two steps. We first define an
algorithm that is similar to $\algname{EnumMIS}$, but with a small
twist that makes it easier to prove incremental polynomial time.
Then, we prove a general result that will imply that, if the new
algorithm enumerates in incremental polynomial time then so does
$\algname{EnumMIS}$.

The new algorithm is similar to $\algname{EnumMIS}$, except that each
of the print commands (lines~\ref{algline:print1},
\ref{algline:print2} and \ref{algline:print3}) is replaced with an
operation that takes the time of the printing, but is actually void
(e.g., printing to \texttt{/dev/null} in Unix). Instead, each maximal
independent set is printed immediately after being removed from $\Q$
(line~\ref{algline:pop}).  Hence, answers are \e{held} until removed
from $\Q$. \addedb{We refer to the resulting algorithm as
$\algname{EnumMISHold}$.} 
\addedn{For theory purposes, it would have been enough to discuss only $\algname{EnumMISHold}$, which is easier to analyze. However, since delaying the results is not required to obtain the theoretical guarantees, we also discuss $\algname{EnumMIS}$ where we print the results as soon as we have them.}
Next, we prove that $\algname{EnumMISHold}$
enumerates in incremental polynomial time. Observe that to bound the
delay $\algname{EnumMISHold}$, we only need to bound the time between
two executions of line~\ref{algline:pop} of $\algname{EnumMIS}$.

\begin{lemma}\label{lemma:sgr-enum-hold}
  $\algname{EnumMISHold}(x)$ enumerates with incremental polynomial
  time.
\end{lemma}
\begin{qproof}
  
We begin by showing that the size of the node set $\V$ is polynomial in the size of the printed result set $\MO$.
Whenever a new node $v$ is inserted into $\V$ (line~\ref{algline:addtoV}), the set $\Q$ is empty.
The following calls to \algname{Extend} (line~\ref{algline:genMIS}) will generate maximal independent sets containing $v$.
Each of these maximal independent sets is either already in $\MO$, or it is inserted into $\Q$ (line~\ref{algline:newTriangulation2}).
Therefore, at the end of the iteration of the main loop in which $v$ was inserted into $\V$, all maximal independent sets in $\Q$ contain $v$. In the next iteration of the main loop, if such an iteration exists, one of these newly generated independent sets will be printed and inserted into $\MO$ (line~\ref{algline:push}).
That is, at the beginning of every iteration of the algorithm (specifically, line~\ref{algline:nodeLoop1}), every node $v \in \V$ belongs to some maximal independent set that has already been printed (and thus part of $\MO$). Since we assume tractable expansion, each independent set in $\MO$ contains at most $p(|x|)$ nodes, and we can conclude that $|\V| \leq p(|x|)\cdot|\MO|$.

We now bound the time between two executions of
line~\ref{algline:pop} of
$\algname{EnumMIS}$. Line~\ref{algline:push} takes polynomial time
in $|x|$ (since there are at most exponentially many independent
sets, $(\G,\Av,\Ae)$ has a {tractable expansion}, and operations
on $\MO$ require a logarithmic number of comparisons in the cardinality).  The number of
iterations of line~\ref{algline:nodeLoop1} is at most the size of
$\V$, which is polynomial in the number of answers printed so far (due
to the above observation). Each operation in that iteration takes
time polynomial in $|x|$.

The loop of line~\ref{algline:hasNextLoop} repeats (at most) until a
node that belongs to none of the printed answers is generated.  Hence,
the observation that this number is polynomial in the size of the
output, along with the tractable expansion, again implies that we
iterate a number of times that is polynomial in the number of answers
printed so far. The loop of line~\ref{algline:printedLoop} repeats at
most as many times as the number of answers in $\MO$, and all of these
have been printed before. Besides the loops, each of
lines~\ref{algline:generateNode}--\ref{algline:mainLoopEnd} takes
polynomial time in $|x|$. %
\end{qproof}

Lemma~\ref{lemma:sgr-enum-hold} shows that $\algname{EnumMISHold}$
enumerates with incremental polynomial time. Next, we show the same
for $\algname{EnumMIS}$. The key point is that every answer is printed
in $\algname{EnumMIS}$ \e{no later} than it is printed in
$\algname{EnumMISHold}$.  \addedb{Note that this holds even though the two
algorithms do not necessarily enumerate in the same order (as we
make no assumptions about the order of removal in $\Q$), since we
assume that $\algname{EnumMISHold}$ spends on void the printing time
of $\algname{EnumMIS}$.}
We will prove that this suffices to conclude
that if $\algname{EnumMISHold}$ enumerates in
incremental polynomial time, then so does $\algname{EnumMIS}$.  We
prove here a general result. Let $\P$ be an enumeration problem, and
let $A$ be a solver for $\P$.  For input $x$ and answer $y\in\P(x)$,
we denote by $\time_{A,x}(y)$ the time in which $y$ is printed. We
prove the following theorem.

\def\thmincimpliesinc{
 Let $\P$ be an enumeration problem, and let $A$ and $B$ be two
  solvers for $\P$. Suppose that for all instances $x$ and for all answers $y\in\P(x)$
  we have $\time_{A,x}(y)\leq\time_{B,x}(y)$. If $B$ enumerates in
  incremental polynomial time, then so does $A$.
}

\begin{theorem}\label{thm:inc-implies-inc}
\thmincimpliesinc
\end{theorem}

Theorem~\ref{thm:inc-implies-inc} is not a vacuous statement, since
the order of results may differ between $A$ and $B$. Furthermore, the
corollary no longer holds when substituting ``incremental polynomial
time'' with ``polynomial delay.''
\addedn{
For example, imagine two algorithms that print all subsets of an input set. The first prints a new answer after every two time ticks, while the second prints them after every single time tick, except for the last answer which is printed at the same time in both algorithms.
The first algorithm meets the guarantee of polynomial delay, and even though the second algorithm prints every answer no later than the first, the second algorithm does not enumerate in polynomial delay as its delay before the last answer is exponential.
}

Let $\P$ be an enumeration problem, let $A$ be a solver for $\P$, and
let $x$ be input for $A$. If $\tau$ is a time tick during the
execution of $A(x)$, then we denote by $\answers_{A,x}(\tau)$ the
answers $y\in\P(x)$ that have been printed before time $\tau$ is
reached.  We have the following lemma.

\def\lemmaequivinc{
	Let $\P$ be an enumeration problem, and $A$ a solver for $\P$. The
	following are equivalent.
	\begin{enumerate}
		\item $A$ enumerates in incremental polynomial time.
		\item There is a polynomial $p$ such that for all input $x$
                  and time tick $\tau$ it holds that
                  \[p(|x|+|\answers_{A,x}(\tau)|)>\tau\,.\]
	\end{enumerate}
}

\begin{lemma}\label{lemma:eqiv-inc}
	\lemmaequivinc
\end{lemma}
\begin{qproof}
	Denote the time of the $N$th result by $t_N$.
	\partitle{$\mathbf{ 1 \Rightarrow 2}$} If $A$ enumerates in incremental
        polynomial time, there exists a polynomial $p_1$ such that
        $t_{N+1}-t_N \leq p_1(|x|+N)$. Without loss of generality, we
        assume that $p_1$ is monotone (as every polynomial is upper
        bounded by some monotone polynomial, and we can replace $p_1$
        with such polynomial). We get the following on the printing
        time of the $N$th result.
	\begin{equation*}
	t_N = \sum_{i=1}^{N}{t_i-t_{i-1}} 
	\leq \sum_{i=1}^{N}{p_1(|x|+i-1)} 
	\leq N\cdot p_1(|x|+N-1)
	\end{equation*}
	In this case we get that for any time $\tau$ there exists a polynomial $p_2$ such that the following holds.
	\begin{equation*}
	\begin{split}
	\tau &< t_{|\answers_{A,x}(\tau)|+1} \leq (|\answers_{A,x}(\tau)|+1)\cdot p_1(|x|+|\answers_{A,x}(\tau)|)  \\
	&\leq p_2(|x|+|\answers_{A,x}(\tau)|)
	\end{split}
	\end{equation*}
	\partitle{$\mathbf{ 2 \Rightarrow 1}$} Assume now that $p_3(|x|+|\answers_{A,x}(\tau)|)>\tau$ for any time $\tau$.
	Consider the delay after the $N$th answer.
	\begin{equation*}
	t_{N+1}-t_N \leq t_{N+1} < p_3(|x|+N+1)
	\end{equation*}
	This shows that there exists a polynomial $p_4$ such that 
	$t_{N+1}-t_N < p_4(|x|+N)$,
	meaning that $A$ enumerates in incremental polynomial time.
\end{qproof}

We can now prove Theorem~\ref{thm:inc-implies-inc}.

\begin{proof}
  Using the characterization of Lemma~\ref{lemma:eqiv-inc}, let $p$ be
  a polynomial such that for all $x$ and $\tau$ we have
  $p(|x|+|\answers_{B,x}(\tau)|)>\tau$.  The condition
  of the theorem implies that at every time tick $\tau$, the set of
  answers printed by $B$ is a subset of the set of answers printed by
  $A$, and therefore, $|\answers_{A,x}(\tau)|\geq
  |\answers_{B,x}(\tau)|$. Again since we can assume monotonicity, we conclude that
  $p(|x|+|\answers_{A,x}(\tau)|)>\tau$ as well. We use
  Lemma~\ref{lemma:eqiv-inc} to conclude that $A$ enumerates in
  incremental polynomial time.
\end{proof}

Using the algorithms $\algname{EnumMIS}$ and $\algname{EnumMISHold}$
as $A$ and $B$ in Theorem~\ref{thm:inc-implies-inc}, respectively, the
combination with Lemma~\ref{lemma:sgr-enum-hold} implies that
$\algname{EnumMIS}$ enumerates in incremental polynomial time, as
claimed.

\subsection{Tightness of the Algorithm}\label{sec:sgrpdelay}

In the following, we show that the time bounds that
$\algname{EnumMIS}$ achieves are tight since it is not possible to
solve the same problem with polynomial delay under the $\SETH$
assumption.

We recall that $k$-SAT is the satisfiability problem over $n$
variables, where every clause contains at most $k$ literals. $\SETH$
states that there is no algorithm for solving $k$-SAT in
$O^{*}(2^{(1-\varepsilon)n})$ time for a fixed $\varepsilon$ and all
$k$\addedn{, where the $O^{*}$-notation omits polynomial factors}.
\begin{definition}[The Strong Exponential Time Hypothesis]
For every $\varepsilon > 0$ there exists a $k$ such that $k$-SAT requires time larger than $2^{(1-\varepsilon)n}$ where $n$ is the number of variables.
\end{definition}
\addedn{
Let $(\calG,A_V,A_E)$ be the tractable expansion of a tractably accessible SGR. We denote by
$\mathsf{SMIS(\calG,A_V,A_E)}$ the following enumeration problem: Given an instance $x$,
enumerate all $\maxind(\G(x))$.
}
\begin{proposition}\label{prop:lowerbound}
There exists some tractably accessible SGR with a tractable expansion $(\calG,A_V,A_E)$,
such that \addedn{$\mathsf{SMIS(\calG,A_V,A_E)}$} cannot be enumerated with
a polynomial delay, assuming the $\SETH$.
\end{proposition}
\begin{proof}
Let $k\geq 3$, and
let $\phi$ be an instance of $\kSAT$ with $\var(\phi)=\{x_1,\ldots,x_n\}$ (for
readability, we assume that $n\geq 2$ is even). We will
show that a polynomial delay algorithm for enumerating $\maxind(\G(x))$ will decide satisfyability
of $\phi$ within time $2^{\sfrac{n}{2}}\cdot\poly(|\phi|)$. This is true for any choice
of $k$, which is not
possible assuming the $\SETH$.

We first describe the SGR $(\calG,A_V,A_E)$. For any string $x$ that is not a $\kSAT$ formula,
$\calG(x)=\emptyset$. Otherwise, given a $\kSAT$ instance $\phi$, we define $\calG(x)$ as follows:
The set of vertices represents all possible truth assignments on $\frac{n}{2}$ variables twice,
with two additional nodes $\bot_A$ and $\bot_B$. Intuitively, 
$V_A$ corresponds to all possible truth assignments on the variables $x_1,\ldots,x_{\frac{n}{2}}$,
and $V_B$ corresponds to all possible truth assignments on the remaining variables
$x_{\frac{n}{2}+1},\ldots,x_n$.
That is,
\begin{align*}
V_A &= \{A\}\times\{0,1\}^{\frac{n}{2}}\\
V_B &= \{B\}\times\{0,1\}^{\frac{n}{2}}\\
V(\calG(\phi)) &= V_A \cup V_B \cup \{\bot_A\}\cup\{\bot_B\}.
\end{align*}

To define the set of edges, we first start with edges between the set $V_A$ and $V_B$.
There is an edge $(u,v)$ for $u\in V_A$, $v\in V_B$ if and only if $u$ and $v$ together encode a truth
assignment
that does not satisfy $\phi$. Moreover, we also add all edges between nodes in $V_A$, between nodes
in $V_B$ and certain connections to the nodes $\bot_A$ and $\bot_B$ as follows:
\begin{align*}
E_{unsat} &= \{\{u,v\}|\exists a_1,\ldots,a_n \in\{0,1\}\text{ s.t. }
u=(A,a_1,\ldots,a_{\frac{n}{2}})\in V_A, \\
&\phantom{= \{\{u,v\}|}v=(B, a_{\frac{n}{2}+1},\ldots,a_n)\in V_B \text{ and } \phi(a_1,\ldots,a_n)=\false\}.\\
E(\calG(\phi)) &=  E_{unsat} \cup \{\bot_A,\bot_B\} \cup \{\{u,v\}\mid u,v \in V_A\} \cup \{\{u,v\}\mid u,v \in V_B\}\\
 &\phantom{=} \cup  \{\{u, \bot_A\}\mid u\in V_A\} \cup \{\{u, \bot_B\}\mid u\in V_B \}
\end{align*}
We first note that this SGR is tractably accessible. Indeed, the set of nodes can be enumerated with a
polynomial (even constant) delay, and for any $u,v\in \calG(\phi)$, we can check whether
$\{u,v\}\in E(\calG(\phi))$ in polynomial time, since evaluation of any $\kSAT$ formula can be done within polynomial
(or even linear) time. To show that this SGR also has a tractable expansion, we note that the set of
maximal independent sets of $\calG(\phi)$ is given as the union of the sets $I_A, I_B$ and $I_{sat}$
with
\begin{align*}
I_A &= \{\{u, \bot_B\} \mid u\in V_A\}, \quad I_B = \{\{u,\bot_A\}\mid u\in V_B\}\text{ and }\\
I_{sat} &=  \{\{u,v\}|\exists a_1,\ldots,a_n \in\{0,1\}\text{ s.t. }
u=(A,a_1,\ldots,a_{\frac{n}{2}})\in V_A, \\
&\phantom{= \{\{u,v\}|}v=(B, a_{\frac{n}{2}+1},\ldots,a_n)\in V_B \text{ and } \phi(a_1,\ldots,a_n)=\true\}.\\
\end{align*}
Every maximal independent set of $\calG(\phi)$ is of size $2$, satisfying the first condition of
a tractable expansion. For the second condition, we note that every subset $I$ of $V(\calG)$ of
size one can be extended trivially to a maximal independent set
(by adding either $\bot_A$, $\bot_B$, or in case that $I\subset\{\bot_A,\bot_B\}$ some arbitrary element
from $V_A$ or $V_B$), and for any subset of size two, we can check whether $I$ is (maximally) independent
within polynomial time. 

Note that $\phi$ is satisfiable if and only if $\maxind(\G(x))$ contains more than the
sets $I_A$ and $I_B$.
Assume that we can enumerate $\maxind(\G(x))$ with a polynomial delay. We can output $2\cdot 2^{\frac{n}{2}}$
many solutions within time $2^{\frac{n}{2}}\cdot\poly(|\phi|)$, meaning that
we can decide whether there are more than $2\cdot 2^{\frac{n}{2}}$ many maximal independent sets
of $\calG(\phi)$ within in the same time bound. Since $\phi$ is satisfyable iff $\calG(\phi)$
has at least $2\cdot 2^{\frac{n}{2}} + 1$ maximal independent sets, we are done. 
\end{proof}

\subsection{Note on Space Usage}

\addedb{ We conclude this section with a discussion on the space
  usage. Note that our algorithm may reach an exponential space as it
  relies on remembering all past answers to avoid the production of
  duplications. This cost is already incurred in the enumerators of
  maximal independent sets that form the basis of our
  algorithm~\cite{DBLP:conf/vldb/CohenFKKS06,DBLP:journals/jcss/CohenKS08,DBLP:journals/siamcomp/LawlerLK80}.
  However, several algorithms for enumerating maximal independent sets
  (and more generally maximal sets w.r.t.~different properties)
  guarantee both polynomial delay and polynomial space, including the
  \e{reverse search}~\cite{DBLP:journals/dam/AvisF96}, the algorithm
  of Conte et al.~\cite{DBLP:conf/spire/ConteGMUV17}, and the
  \e{proximity search}~\cite{DBLP:conf/stoc/ConteU19}. However, it
  is not clear to us how these algorithms can be adapted to
  enumerating the maximal independent sets of an SGR in a manner that
  limits the space, given that the set of nodes is not known upfront
  (and in light of Proposition~\ref{prop:lowerbound}).  Moreover, note
  that the exponential space of our algorithm is also required for
  storing the (possibly exponential number of) past generated nodes of
  the SGR.}

\addedb{ A natural question then remains open: can
  Theorem~\ref{thm:sgr-inc-ptime} be improved to require only
  polynomial space (at least when ignoring the space used by invoking
  the SGR functions)? Particularly, we leave open the question of
  whether and how the aforementioned polynomial-space algorithms can
  be adapted to enumerating the maximal independent sets of an SGR,
  and whether we can avoid storing all produced nodes. It appears that
  further assumptions on the SGR are required to this aim, and
  establishing these assumptions is left as a future direction.}

\section{Enumerating Minimal Triangulations}\label{sec:enumMT}

In Section~\ref{sec:msgraph} we introduced $\sepsys$ and claimed that
it is an SGR.  In this section, we will use known results to reduce
the problem of enumerating the minimal triangulations of a graph to
the problem of enumerating the maximal independent sets for $\sepsys$.
We will describe how to enumerate the nodes of $\sepsys$ with
polynomial delay, concluding that it is in fact an SGR.  We will
further show that $\sepsys$ has a tractable expansion
(Definition~\ref{def:tractable-expansion}), and therefore
Theorem~\ref{thm:sgr-inc-ptime} can be applied to conclude that the
minimal triangulations can be enumerated in incremental polynomial
time.

\subsection{Reduction}
We use the following notation. Let $g$ be a graph.  We denote by
$\clqminsep(g)$ the set of minimal separators $S$ of $g$, such that
$S$ is a clique of $g$. Let $\sepset$ be a subset of $\minsep(g)$. We
denote by $g_{[\sepset]}$ the graph that results from saturating the
minimal separators in $\sepset$.

Parra and Scheffler~\cite{DBLP:journals/dam/ParraS97} have shown the
following connection between minimal triangulations and maximal sets
of \e{pairwise-parallel minimal separators} (that is, every two
minimal separators in the set are non-crossing).

\begin{citedtheorem}{Parra and
    Scheffler~\cite{DBLP:journals/dam/ParraS97}}\label{thm:ParraS97}
  Let $g$ be a graph.
	\begin{enumerate}
        \item \label{item:ParraS1st} If $\sepset$ is a maximal set of pairwise-parallel
          minimal separators of $g$, then $g_{[\sepset]}$ is a minimal
          triangulation of $g$, and $\minsep(g_{[\sepset]})=\sepset$.
        \item \label{item:ParraS2nd} If $h$ is a minimal triangulation of $g$, then the set
          $\sepset=\minsep(h)$ is a maximal set of pairwise-parallel
          minimal separators in $g$, and $h=g_{[\sepset]}$.
	\end{enumerate}
\end{citedtheorem}

Theorem~\ref{thm:ParraS97}, combined with
Theorem~\ref{thm:minsep-chordal-ptime},
gives the desired reduction in the following corollary. Recall that
the graph $\G\ms(g)$ is defined in Section~\ref{sec:msgraph}, as part
of the SGR $\sepsys=(\G\ms,\Av\ms,\Ae\ms)$.

\begin{corollary}\label{cor:triang-to-minsep}
  For a graph $g$, there is a polynomial-time-computable bijection
  between the following two sets:
  \begin{itemize}
  \item $\maxind(\G\ms(g))$, that is, the set of all maximal sets of
    pairwise-parallel minimal separators of $g$.
  \item $\mintri(g)$, that is, the set of all minimal triangulations
    of $g$.
  \end{itemize}
\end{corollary}

Hence, it suffices to prove that $\sepsys$ has a tractable expansion,
which we do next.

\subsection{Enumerating Minimal Separators}\label{sec:min-sep-enum}

{\def\myrulewidth{3in}
  \begin{algseries}{t}{Enumerating $\minsep(g)$ with polynomial
      delay\label{alg:min-sep-enum} (a variation of the algorithm by
      Berry et al.~\cite{conf/wg/BerryBC99})}
		\begin{insidealg}{PDelayAllMinSep}{$g$}
			\STATE $\Q\asn\emptyset$
			\STATE $\MO\asn\emptyset$
			\FORALL{$v\in{\nodes(g)}$}
			\FORALL{$C\in\mathscr{C}(\set{v}\cup{N(v)})$}
			\STATE $\Q\asn\Q\cup\set{N(C)}$
			\ENDFOR
			\ENDFOR
			\WHILE{$\Q\neq\emptyset$}
			\STATE $S\asn\Q.\algname{pop}()$
			\FORALL{$x\in{S}$}
			\STATE $S'\asn\set{N(C)\mid C\in\mathscr{C}(S\cup{N(x)})}$
			\IF{$S'\notin\MO$}
			\STATE $\Q\asn\Q\cup\set{S'}$
			\ENDIF
			\ENDFOR
			\STATE $\MO\asn\MO\cup\set{S}$
			\STATE \textbf{print} $S$
			\ENDWHILE
		\end{insidealg}
\end{algseries}}

We now describe a variation of the algorithm of Berry et
al.~\cite{conf/wg/BerryBC99} that, given a graph $g$, enumerates its
set $\minsep(g)$ of minimal separators.  Their algorithm enumerates
with polynomial total time, and with a simple change (that we explain
next) can enumerate with polynomial delay.  Our variation is depicted
in Figure~\ref{alg:min-sep-enum}. There, for $v\in\nodes(g)$ we denote
by $N(v)$ the set of neighbors of $v$. For $U\subseteq\nodes(g)$ we
denote by $N(U)$ the set of neighbors of nodes in $U$, excluding the
nodes of $U$ themselves; that is,
\[N(U)\eqdef \Big(\bigcup_{v\in U}N(v)\Big)\setminus U\,.\]
We also denote by $\mathscr{C}(U)$ the set of connected components of
the graph $g\sm U$ (the graph obtained from $g$ by removing all the
nodes of $U$).

The algorithm remains intrinsically the same as that of Berry et
al.~\cite{conf/wg/BerryBC99}.  Minimal separators are considered as
neighborhoods of connected components. The algorithm finds minimal
separators contained in a set $U\subseteq\nodes(g)$ by taking the
neighborhoods of the connected components of $g\sm{U}$, that is,
$N(C)$ for all $C\in\mathscr{C}(U)$. Initially, the minimal separators
that are contained in the neighborhoods of single nodes are generated
(lines 3--5). Then, every previously generated minimal separator $S$
is processed to produce more minimal separators that are \e{close} to
$S$ (lines 7--12). For every node $v$ in the minimal separator $S$, it
produces minimal separators that are contained in $S\cup{N(v)}$.

Our modification is in the data structures and the time of printing
answers. In Figure~\ref{alg:min-sep-enum}, $\Q$ and $\MO$ play the
role of $\mathscr{S}\sm\mathscr{T}$ and $\mathscr{T}$ of the original
algorithm~\cite{conf/wg/BerryBC99}, respectively.  There,
$\mathscr{S}$ holds all minimal separators generated, and
$\mathscr{T}$ is a subset that holds the separators that were
processed.  The easy access to the separators yet to be processed
(i.e.  $\mathscr{S}\sm\mathscr{T}$), along with printing answers when
processed (in line 13, rather then when revealed in line 11), provides
the polynomial delay.  Correctness is derived directly by the
correctness of the original algorithm, and the polynomial delay can be
easily verified. In particular, the time between two consecutive
results is $O(|\nodes(g)|^3)$.

\subsection{Tractable Expansion}
Recall that Rose~\cite{Rose1970597} proved that a chordal graph has
fewer minimal separators than nodes.  Combined with this result,
Theorem~\ref{thm:ParraS97} gives the first of the two conditions
of Definition~\ref{def:tractable-expansion}.

\begin{corollary}\label{cor:MISsize}
  Let $g$ be a graph. If $I$ is a (maximal) independent set of
  $\G\ms(g)$, then $|I|<\nodes(g)$.
\end{corollary}
\begin{qproof}
  Suppose that $I$ is a maximal set of pairwise-parallel minimal
  separators of $g$. Then by Theorem~\ref{thm:ParraS97}, $h=g_{[I]}$ is
  a minimal triangulation of $g$, and $\minsep(h)=I$. The graph $h$ is
  chordal, hence from Rose~\cite{Rose1970597} we get that
  $|\minsep(h)|<|V(h)|=|V(g)|$.
\end{qproof}

We now turn to proving the second condition of
Definition~\ref{def:tractable-expansion}. We do so by describing
a general procedure for extending a set of pairwise-parallel minimal
separators of a graph $g$ to a maximal such set. Algorithm
\algname{Extend} of Figure~\ref{alg:BBExtend} can apply any
known polynomial time triangulation heuristic, referred to as
$\trihrs$, as a black box. It uses the following procedures as
subroutines.
\begin{itemize}
\item $\algname{Saturate}(g,S)$ receives a graph $g$ and a set $S
  \subseteq V(g)$ of vertices, and saturates $S$ (i.e., modifies $g$
  such that $S$ becomes a clique).
\item $\trihrs(g)$ receives a graph $g$ and returns a (not necessarily
  minimal) triangulation $g'$ of $g$. We assume that this procedure
  runs in polynomial time. (For example, a naive implementation would
  be to add every possible edge; later we discuss smarter
  alternatives.)

\item $\algname{MinTriSandwich}(g,g')$ receives a graph $g$ and a
  triangulation $g'$ of $g$, and returns a \e{minimal} triangulation of
  $g$. We note that, using one of the known
  algorithms~\cite{Dahlhaus1997,DBLP:journals/siammax/Peyton01,Blair2001125},
  this procedure runs in time that is polynomial in the size of the
  graph.
\item $\algname{ExtractMinSeps}(h)$ receives a chordal graph $h$ and
  returns its set of minimal separators. Using the algorithm of
  Kumar~\cite{Kumar1998155}, the execution time of this procedure is
  linear in $h$.
\end{itemize}

\algname{Extend} takes as input a graph $g$ and a set $\varphi$ of
pairwise-parallel minimal separators.  It then proceeds by saturating
the separators in $\varphi$, resulting in $g_{[\varphi]}$.  At this
stage it passes $g_{[\varphi]}$ to the triangulation heuristic
$\trihrs$.  We note that $\trihrs$ does not have to produce a minimal
triangulation. This is important since it allows us to incorporate
\emph{any} method for triangulation or tree decomposition.  (We
discuss in detail the translation between triangulations and tree
decompositions in Section~\ref{sec:proper}.)

The problem of transforming a non-minimal triangulation into a minimal
one is called the \e{minimal triangulation sandwich
  problem}~\cite{Heggernes2006297}. Various polynomial-time algorithms
for this problem
exist~\cite{Dahlhaus1997,DBLP:journals/siammax/Peyton01}, and these
were reported to perform well in practice~\cite{Blair2001125}.

So, at this stage we have a minimal triangulation $h$ of
$g_{[\varphi]}$. Theorem~\ref{thm:HeggernesSaturation} (that we give
in the next section) shows that $h$ is also a minimal triangulation
of $g$. Lemma~\ref{lemma:Extension} (also in the next section) shows
that the set of minimal separators of $h$ contains $\varphi$, which
is essential as we need to \e{extend} $\varphi$.  Finally, we can
apply the algorithm of Kumar~\cite{Kumar1998155} to extract the
minimal separators of the (chordal) graph $h$ in linear time.

{\def\myrulewidth{3in}
  \begin{algseries}{t}{An algorithm for extending a set $\varphi$ of
      pairwise-parallel minimal separators \label{alg:BBExtend}}
    \begin{insidealg}{Extend}{$g$,$\varphi$}      
      \STATE $g_t \asn \trihrs(g_{[\varphi]})$ \label{algline:triHrs}
      \STATE $h \asn \algname{MinTriSandwich}(g_{[\varphi]},g_t)$ \label{algline:sandwich}
      \STATE \textbf{return} $\algname{ExtractMinSeps}(h)$ \label{algline:ExtractSeps}
		\end{insidealg}
	\end{algseries}}

     \subsubsection{Correctness.} To prove correctness of the
      algorithm $\algname{Extend}$ of Figure~\ref{alg:BBExtend}, we
      need the following result by Heggernes~\cite{Heggernes2006297}.

      \begin{citedtheorem}{Heggernes~\cite{Heggernes2006297}}\label{thm:HeggernesSaturation}
        Given a graph $g$, let $\varphi$ be an arbitrary set of
        pairwise-parallel minimal separators of $g$. Obtain a
        graph $g_{[\varphi]}$ by saturating each separator in
        $\varphi$.
        \begin{enumerate}
        \item \label{item:Heggernes1st} $\varphi \subseteq
          \clqminsep(g_{[\varphi]})$, that is, $\varphi$ consists of
          clique minimal separators of $g_{[\varphi]}$.
        \item \label{item:Heggernes2nd} $\clqminsep(g) \subseteq
          \minsep(g_{[\varphi]})$; that is, every clique minimal
          separator of $g$ is a (clique) minimal separator of
          $g_{[\varphi]}$.
	\item \label{item:Heggernes4th} Every minimal triangulation of
          $g_{[\varphi]}$ is a minimal triangulation of $g$.
\end{enumerate}
\end{citedtheorem}

The next lemma builds on Theorems~\ref{thm:ParraS97}
and~\ref{thm:HeggernesSaturation}.

\def\lemmaextension{
 Let $g$ be a graph, and $\varphi$ a set of pairwise-parallel minimal
  separators of $g$. Let $h$ be a minimal triangulation of
  $g_{[\varphi]}$. Then $\varphi \subseteq \minsep(h)$.
}

\begin{lemma}\label{lemma:Extension}
  \lemmaextension
\end{lemma}
\begin{proof}
  By Part~\ref{item:Heggernes1st} of
  Theorem~\ref{thm:HeggernesSaturation} we have that $\varphi
  \subseteq \clqminsep(g_{[\varphi]})$.  Since $h$ is a minimal
  triangulation of $g_{[\varphi]}$ then by Part~\ref{item:ParraS2nd} of Theorem~\ref{thm:ParraS97},
  $h$ is the result of saturating a maximal set, say $\varphi'$, of
  pairwise-parallel minimal separators of $g_{[\varphi]}$.  Therefore,
  by Part~\ref{item:Heggernes2nd} of
  Theorem~\ref{thm:HeggernesSaturation} we have
  $\clqminsep(g_{[\varphi]}) \subseteq \minsep(h)$.  This implies
  that $\varphi \subseteq \minsep(h)$, as claimed.
\end{proof}

We then conclude the correctness of the algorithm.

\def\lemmabbcorrectness{
  Let $\varphi$ be a set of pairwise-parallel minimal separators of a
  graph $g$. \algname{Extend}$(g,\varphi)$ returns a maximal set
  $I$ of pairwise-parallel minimal separators of $g$ such that $\varphi \subseteq
  I$. Furthermore, the algorithm terminates in polynomial time.
}

\begin{lemma}\label{lemma:BBCorrectness}
\lemmabbcorrectness
\end{lemma}
\begin{proof}
  Assuming correctness of procedures $\trihrs$, and
  $\algname{MinTriSandwich}$, the graph $h$ is a minimal triangulation
  of $g_{[\varphi]}$. By Part~\ref{item:Heggernes4th} of
  Theorem~\ref{thm:HeggernesSaturation}, we have that $h$ is a minimal
  triangulation of $g$. Consequently, from Part~\ref{item:ParraS2nd} of Theorem~\ref{thm:ParraS97}
  we get that $\minsep(h)=I$ is a maximal set of pairwise-parallel
  minimal separators of $g$. By Lemma~\ref{lemma:Extension} it holds
  that $\varphi \subseteq \minsep(h)$, making $I$ an extension of
  $\varphi$.  All of the procedures in Figure~\ref{alg:BBExtend} run
  in time that is polynomial in the size of the graph making it
  polynomial as well.
\end{proof}

From Corollary~\ref{cor:MISsize} and Lemma~\ref{lemma:BBCorrectness}
we get the main result of this part.

\begin{theorem}\label{thm:poly-exp}
  The SGR $\sepsys$ has a tractable expansion of independent sets.
\end{theorem}
This theorem allows us to establish the main result of this paper.

\subsection{Main Result}

From Theorems~\ref{thm:poly-exp} and~\ref{thm:sgr-inc-ptime} we
conclude that it is possible to enumerate the maximal independent sets
of $\sepsys$ in incremental polynomial time. Applying the bijection of
Corollary~\ref{cor:triang-to-minsep}, we get the main result of this
paper.

\begin{corollary}\label{cor:min-triang}
  Given a graph, the minimal triangulations can be enumerated in
  incremental polynomial time.
\end{corollary}

In the next section, we will use this result for enumerating tree
decompositions.

\eat{
From Theorems~\ref{thm:poly-exp} and~\ref{thm:sgr-inc-ptime} we
conclude that it is possible to enumerate the maximal independent sets
of $\sepsys$ in incremental polynomial time. Applying the bijections
of Theorems~\ref{cor:triang-to-minsep} and~\ref{thm:proper-mint}, we
get the main result of this paper.

\begin{theorem}\label{thm:main}
  There are algorithms that, given a graph $g$, enumerate in
  incremental polynomial time:
  \begin{enumerate}
\item The minimal triangulations of $g$.
\item The proper tree decompositions of $g$.
\end{enumerate}
\end{theorem}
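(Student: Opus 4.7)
The plan is to bootstrap the theorem from the four main results that have already been established in the paper: Example~\ref{example:msep} (which exhibits $\sepsys$ as a polynomial SGR), Theorem~\ref{thm:poly-exp} (polynomial expansion of independent sets for $\sepsys$), Theorem~\ref{thm:sgr-inc-ptime} (generic enumeration for SGRs with this property), and the bijections in Corollary~\ref{cor:triang-to-minsep} and Theorem~\ref{thm:proper-mint}.

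First I would handle part~(1). Combining Example~\ref{example:msep} with Theorem~\ref{thm:poly-exp} shows that $\sepsys$ is a polynomial SGR with a polynomial expansion of independent sets; applying Theorem~\ref{thm:sgr-inc-ptime} to the input $g$ therefore yields an incremental-polynomial-time enumeration of the maximal independent sets of $\G\ms(g)$. Corollary~\ref{cor:triang-to-minsep} provides a polynomial-time-computable bijection between these maximal independent sets and $\mintri(g)$, so composing the enumeration with this bijection (and applying it on the fly to each produced set) gives an incremental-polynomial-time enumeration of $\mintri(g)$.

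For part~(2), I would stack the enumeration of Theorem~\ref{thm:proper-mint} on top of part~(1). The bijection $M$ of that theorem partitions the proper tree decompositions of $g$ into $|\mintri(g)|$ equivalence classes under $\equivb$, one per minimal triangulation, and within each class the proper tree decompositions can be enumerated with polynomial delay. The composite enumerator would run the enumerator of part~(1), and for each emitted minimal triangulation $h$ start the class-internal polynomial-delay enumerator for $M(h)$, outputting its answers before moving on to the next $h$. I would then argue that the resulting algorithm runs in incremental polynomial time: after $N$ outputs, these outputs come from at most $N$ classes, so the total time spent on class-internal enumeration is polynomial in $|g|+N$ (each delay being polynomial in $|g|$), while the time spent producing the underlying minimal triangulations is polynomial in $|g|+N$ by part~(1).

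The main (mild) obstacle is simply bookkeeping the delay in part~(2): the incremental-polynomial-time guarantee on $\mintri(g)$ bounds the cost to produce the next minimal triangulation in terms of the number of minimal triangulations already produced, not the number of proper tree decompositions already printed, so I need to observe that this is fine because each class contributes at least one decomposition, so the number of classes opened is at most $N$. Everything else is a direct chaining of the cited theorems.
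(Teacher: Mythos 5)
Your proposal is correct and follows essentially the same route as the paper, which proves the theorem by exactly this chaining: Theorem~\ref{thm:poly-exp} plus Theorem~\ref{thm:sgr-inc-ptime} give incremental-polynomial-time enumeration of the maximal independent sets of $\G\ms(g)$, and the bijections of Corollary~\ref{cor:triang-to-minsep} and Theorem~\ref{thm:proper-mint} transfer this to minimal triangulations and proper tree decompositions. Your explicit bookkeeping for part~(2) (bounding the number of opened classes by the number of outputs) is a detail the paper leaves implicit, but it is the intended argument.
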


}

\eat{
\begin{theorem}
$\sepsys=(\G,\Av,\Ae)$ is polynomial.
\end{theorem}
\begin{qproof}
  The algorithm $\Av$ that enumerates the minimal separators of
  $\G(g)$ runs in polynomial delay, within time $O(n^3)$ between
  consecutive minimal separators. Since the string representation of a
  graph $g$ is at least as large as the cardinality of its vertex set,
  the first requirement is met.

By the definition of separator graph, testing for the existence of an
edge $(u,v) \in \sigma_g$ is equivalent to testing whether $u \crosses
v$. Algorithm $\Ae$ will run a depth first search over the graph $g
\sm u$ ($u$ represents a minimal vertex separator), beginning from
some vertex in the separator $v$. The algorithm returns true if not
all the members of $v$ are part of the resulting DFS tree (belong to
the same connected component). This can be done in linear time.
\end{qproof}
}

\section{Enumerating the Proper Tree Decompositions}\label{sec:proper}
In this section we define the notion of a \e{proper} tree
decomposition, which is essentially a tree decomposition that is,
intuitively, not deemed redundant due to another tree
decomposition. Our ultimate goal is to enumerate \e{only} the proper
tree decompositions, and we will show that this translates to
enumerating the minimal triangulations.

\subsection{Proper Tree Decompositions}

Let $d_1$ and $d_2$ be two tree decompositions of a graph $g$.  We say
that $d_1$ and $d_2$ are \e{bag equivalent}, denoted $d_1\equivb d_2$,
if $\bags(d_1)=\bags(d_2)$.  We denote by $d_1\sqsubseteq d_2$ the
fact that for every bag $b_1\in\bags(d_1)$ there exists a bag
$b_2\in\bags(d_2)$ such that $b_1\subseteq b_2$.

Let $g$ be a graph, and let $d$ and $d'$ be tree decompositions of
$g$. We say that $d'$ \e{strictly subsumes} $d$ if $d'\sqsubseteq d$
and $\bags(d)\not\subseteq\bags(d')$ in multiset notation (i.e., some bag appears in $d$ more times than it appears in $d'$). A tree decomposition is
\e{proper} if it is not strictly subsumed by any tree decomposition,
and it is \e{improper} otherwise. 

Figure \ref{fig:decompositions} shows examples of proper and improper
tree decompositions. It can be shown that $d_1$ is proper (e.g., since
every clique of $g$ is contained in some bag of $d$, as we prove in
Proposition~\ref{proposition:cliquesInBags}). But
$d_2$ is not proper, since it is subsumed by $d_1$; that is, every bag
of $d_1$ is contained in some bag of $d_2$, but the bag
$\set{1,2,3,4}$ is not a bag of $d_1$. For the same reason, $d_2$ is
subsumed also by $d_3$. Finally, $d_3$ is subsumed by $d_1$ since
every bag of $d_1$ is a bag of $d_3$, but the bag $\set{3,4}$ is not a
bag of $d_1$.

\begin{figure}[t]
\centering
\begin{picture}(0,0)%
\includegraphics{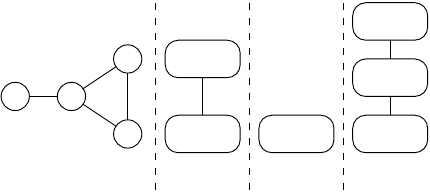}%
\end{picture}%
\setlength{\unitlength}{3947sp}%
\begingroup\makeatletter\ifx\SetFigFont\undefined%
\gdef\SetFigFont#1#2#3#4#5{%
  \reset@font\fontsize{#1}{#2pt}%
  \fontfamily{#3}\fontseries{#4}\fontshape{#5}%
  \selectfont}%
\fi\endgroup%
\begin{picture}(3433,1524)(1755,-2173)
\put(3376,-1786){\makebox(0,0)[b]{\smash{{\SetFigFont{9}{10.8}{\familydefault}{\mddefault}{\updefault}{\color[rgb]{0,0,0}$2,3,4$}%
}}}}
\put(3376,-1186){\makebox(0,0)[b]{\smash{{\SetFigFont{9}{10.8}{\familydefault}{\mddefault}{\updefault}{\color[rgb]{0,0,0}$1,2$}%
}}}}
\put(3376,-2086){\makebox(0,0)[b]{\smash{{\SetFigFont{9}{10.8}{\familydefault}{\mddefault}{\updefault}{\color[rgb]{0,0,0}$d_1$}%
}}}}
\put(4876,-1336){\makebox(0,0)[b]{\smash{{\SetFigFont{9}{10.8}{\familydefault}{\mddefault}{\updefault}{\color[rgb]{0,0,0}$2,3,4$}%
}}}}
\put(4876,-2086){\makebox(0,0)[b]{\smash{{\SetFigFont{9}{10.8}{\familydefault}{\mddefault}{\updefault}{\color[rgb]{0,0,0}$d_3$}%
}}}}
\put(4876,-1786){\makebox(0,0)[b]{\smash{{\SetFigFont{9}{10.8}{\familydefault}{\mddefault}{\updefault}{\color[rgb]{0,0,0}$3,4$}%
}}}}
\put(4876,-886){\makebox(0,0)[b]{\smash{{\SetFigFont{9}{10.8}{\familydefault}{\mddefault}{\updefault}{\color[rgb]{0,0,0}$1,2$}%
}}}}
\put(2326,-2086){\makebox(0,0)[b]{\smash{{\SetFigFont{9}{10.8}{\familydefault}{\mddefault}{\updefault}{\color[rgb]{0,0,0}$g$}%
}}}}
\put(4126,-1786){\makebox(0,0)[b]{\smash{{\SetFigFont{9}{10.8}{\familydefault}{\mddefault}{\updefault}{\color[rgb]{0,0,0}$1,2,3,4$}%
}}}}
\put(4126,-2086){\makebox(0,0)[b]{\smash{{\SetFigFont{9}{10.8}{\familydefault}{\mddefault}{\updefault}{\color[rgb]{0,0,0}$d_2$}%
}}}}
\put(1876,-1448){\makebox(0,0)[b]{\smash{{\SetFigFont{9}{10.8}{\familydefault}{\mddefault}{\updefault}{\color[rgb]{0,0,0}1}%
}}}}
\put(2326,-1448){\makebox(0,0)[b]{\smash{{\SetFigFont{9}{10.8}{\familydefault}{\mddefault}{\updefault}{\color[rgb]{0,0,0}2}%
}}}}
\put(2776,-1748){\makebox(0,0)[b]{\smash{{\SetFigFont{9}{10.8}{\familydefault}{\mddefault}{\updefault}{\color[rgb]{0,0,0}4}%
}}}}
\put(2776,-1148){\makebox(0,0)[b]{\smash{{\SetFigFont{9}{10.8}{\familydefault}{\mddefault}{\updefault}{\color[rgb]{0,0,0}3}%
}}}}
\end{picture}%
 \caption{A graph $g$ and tree decompositions $d_1$, $d_2$ and $d_3$ of
  $g$. The decomposition $d_1$ is proper, but $d_2$ and $d_3$ are
  subsumed by $d_1$, and hence, improper.}
  \label{fig:decompositions}
\end{figure}

\subsection{Enumeration}

The main result of this section is the following, showing that
enumerating the proper tree decompositions reduces to enumerating the
minimal triangulations.

\begin{theorem}\label{thm:proper-mint}
  Let $g$ be a graph. There is a bijection $M$ between $\mintri(g)$
  and the equivalence classes of $\equivb$ over the proper tree
  decompositions of $g$. Moreover, given a minimal triangulation $h$
  of $g$, the proper tree decompositions in the class $M(h)$ can be
  enumerated with polynomial delay.
\end{theorem}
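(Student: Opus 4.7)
The natural candidate for $M^{-1}$ is the map that sends the $\equivb$-equivalence class of $d$ to $\sat(g,d)$; this is well-defined because $\sat(g,d)$ is determined by $\bags(d)$. To prove the first half of the theorem it then suffices to verify three things: (a) if $d$ is proper then $\sat(g,d) \in \mintri(g)$; (b) if $d_1, d_2$ are both proper and $\sat(g,d_1) = \sat(g,d_2)$, then $d_1 \equivb d_2$; and (c) every $h \in \mintri(g)$ equals $\sat(g,d)$ for some proper $d$. The lever for all three is the observation that a proper tree decomposition $d$ of $g$ remains proper for the chordal graph $h \asn \sat(g,d)$ --- any strict subsumer of $d$ as a decomposition of $h$ is automatically a strict subsumer of $d$ as a decomposition of $g$, since $\edges(g) \subseteq \edges(h)$ --- so Proposition~\ref{prop:chordal-one-td} forces $\bags(d) = \maxcl(h)$.

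For (a), suppose towards contradiction that some triangulation $h'$ of $g$ satisfies $\edges(h') \subsetneq \edges(h)$. Pick a tree decomposition $d'$ of $h'$ with $\bags(d') = \maxcl(h')$ (via Theorem~\ref{theorem:Jordan} together with the standard clique-tree refinement for chordal graphs). Since $\edges(g) \subseteq \edges(h')$, $d'$ is a tree decomposition of $g$; since each bag of $d'$ is also a clique of $h$, Proposition~\ref{proposition:cliquesInBags} gives $d' \sqsubseteq d$. Any edge $\set{u,v} \in \edges(h) \sm \edges(h')$ lies in some maximal clique $C \in \bags(d)$ of $h$, and $C$ cannot be a clique of $h'$, hence $C \notin \bags(d')$; thus $d'$ strictly subsumes $d$, a contradiction. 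Claim (b) is immediate, since both $d_i$ are proper for the same chordal $h$ and therefore share $\maxcl(h)$ as their bag set. For (c), fix a clique tree $d$ of $h$: it is automatically a tree decomposition of $g$ with $\sat(g,d) = h$, and any strict subsumer of $d$ would --- by minimality of $h$ combined with a symmetric application of Proposition~\ref{proposition:cliquesInBags} --- be forced to include every maximal clique of $h$ among its bags, contradicting strict subsumption.

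For the enumeration claim, observe that all tree decompositions in $M(h)$ share the bag set $\maxcl(h)$ and differ only in the underlying tree, so they are exactly the \emph{clique trees} of the chordal graph $h$. By the classical characterization of Bernstein and Goodman, clique trees of $h$ are precisely the maximum-weight spanning trees of the \emph{clique intersection graph} of $h$, whose vertices are $\maxcl(h)$ and whose edge weights are $|C_i \cap C_j|$. This intersection graph has polynomially many nodes (since $|\maxcl(h)| \le |\nodes(g)|$ for chordal $h$) and can be built in polynomial time. The plan is then to plug in any polynomial-delay algorithm for enumerating maximum-weight spanning trees of a weighted graph. The main obstacle is securing \emph{polynomial delay} rather than merely polynomial total time for this enumeration, but this is known to be achievable via matroid-based backtracking algorithms for spanning-tree enumeration.
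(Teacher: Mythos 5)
Your proposal is correct and takes essentially the same route as the paper: you build the bijection in the inverse direction ($[d]\mapsto\sat(g,d)$ where the paper defines $h\mapsto M(h)$), but your steps (a)--(c) are exactly the paper's surjectivity, injectivity and ``right range'' arguments, resting on the same ingredients (Propositions~\ref{proposition:cliquesInBags}, \ref{proposition:sat} and~\ref{prop:chordal-one-td}, minimality of the triangulation, and the antichain property of maximal cliques), and the enumeration claim is reduced, as in the paper, to polynomial-delay enumeration of maximum-weight spanning trees of the clique intersection graph. The only difference is cosmetic: the paper cites a concrete polynomial-delay maximum-spanning-tree enumeration algorithm where you gesture at matroid-based backtracking.
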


Combined with Corollary~\ref{cor:min-triang}, we get the following.

\begin{corollary}\label{cor:proper-td}
  The set of proper tree decompositions of a given graph can be
  enumerated in incremental polynomial time.
\end{corollary}

Next, we discuss the proof of Theorem~\ref{thm:proper-mint}, and in
particular show how $M$ is defined. 
We first
need some propositions.  The following proposition is a folklore, and
it is using a result by Heggernes~\cite{Heggernes:2006}, showing that
every collection of subtrees of a tree satisfies the \e{Helly
  property}.

\def\propositioncliquesInBags{ If $d$ is a tree decomposition of a
  graph $g$, then every clique of $g$ is contained in some bag of $d$.
}
\begin{proposition}\label{proposition:cliquesInBags}
 \propositioncliquesInBags
\end{proposition}
\begin{proof}
  We use the fact that the junction-tree property of a tree
  decomposition is equivalent to the property that for every node $v$
  of the graph, the bags of the tree decomposition that contain $v$
  form a (connected) subtree. Denote $d=(t,\beta)$ and let $C$ be a clique of $g$.
  Every node $v$ in $C$ defines a subtree of $t$ that is induced by
  the bags that contain $v$. Since $d$ covers the edges of $g$, every
  two nodes in $C$ must share some bag in $d$, and hence, their
  subtrees must share a vertex.  Heggernes~\cite{Heggernes:2006} shows
  that every collection of subtrees of a tree satisfies the \e{Helly
    property}: if every two subtrees share a vertex, then there exists
  a vertex that is shared by all the subtrees.  In particular, there
  exists a vertex in $d$ common to all of these subtrees; this shared
  node corresponds to a bag that contains $C$.
\end{proof}

The following proposition states that in a proper tree decomposition,
there is no containment among bags.

\begin{figure}[t]
	\centering
\begin{picture}(0,0)%
\includegraphics{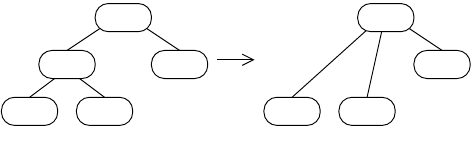}%
\end{picture}%
\setlength{\unitlength}{3947sp}%
\begingroup\makeatletter\ifx\SetFigFont\undefined%
\gdef\SetFigFont#1#2#3#4#5{%
  \reset@font\fontsize{#1}{#2pt}%
  \fontfamily{#3}\fontseries{#4}\fontshape{#5}%
  \selectfont}%
\fi\endgroup%
\begin{picture}(3774,1267)(1339,-1991)
\put(4426,-1936){\makebox(0,0)[b]{\smash{{\SetFigFont{9}{10.8}{\familydefault}{\mddefault}{\updefault}{\color[rgb]{0,0,0}$d'$}%
}}}}
\put(2326,-1936){\makebox(0,0)[b]{\smash{{\SetFigFont{9}{10.8}{\familydefault}{\mddefault}{\updefault}{\color[rgb]{0,0,0}$d$}%
}}}}
\put(2326,-886){\makebox(0,0)[b]{\smash{{\SetFigFont{9}{10.8}{\familydefault}{\mddefault}{\updefault}{\color[rgb]{0,0,0}$A$}%
}}}}
\put(3676,-1636){\makebox(0,0)[b]{\smash{{\SetFigFont{9}{10.8}{\familydefault}{\mddefault}{\updefault}{\color[rgb]{0,0,0}$D$}%
}}}}
\put(4426,-886){\makebox(0,0)[b]{\smash{{\SetFigFont{9}{10.8}{\familydefault}{\mddefault}{\updefault}{\color[rgb]{0,0,0}$A$}%
}}}}
\put(1876,-1261){\makebox(0,0)[b]{\smash{{\SetFigFont{9}{10.8}{\familydefault}{\mddefault}{\updefault}{\color[rgb]{0,0,0}$B$}%
}}}}
\put(2776,-1261){\makebox(0,0)[b]{\smash{{\SetFigFont{9}{10.8}{\familydefault}{\mddefault}{\updefault}{\color[rgb]{0,0,0}$C$}%
}}}}
\put(2176,-1636){\makebox(0,0)[b]{\smash{{\SetFigFont{9}{10.8}{\familydefault}{\mddefault}{\updefault}{\color[rgb]{0,0,0}$E$}%
}}}}
\put(1576,-1636){\makebox(0,0)[b]{\smash{{\SetFigFont{9}{10.8}{\familydefault}{\mddefault}{\updefault}{\color[rgb]{0,0,0}$D$}%
}}}}
\put(4876,-1261){\makebox(0,0)[b]{\smash{{\SetFigFont{9}{10.8}{\familydefault}{\mddefault}{\updefault}{\color[rgb]{0,0,0}$C$}%
}}}}
\put(4276,-1636){\makebox(0,0)[b]{\smash{{\SetFigFont{9}{10.8}{\familydefault}{\mddefault}{\updefault}{\color[rgb]{0,0,0}$E$}%
}}}}
\end{picture}%
 	\caption{Obtaining a strictly subsuming tree decomposition $d'$ given
		a tree decomposition $d$ with $B \subseteq C$.}
	\label{fig:antichain}
\end{figure}

\def\propositionantichain{
If $d$ is a proper tree decomposition of a graph $g$, then $\bags(d)$
is an antichain w.r.t.~set inclusion (that is, no bag contains
another).
}
\begin{proposition}
\label{proposition:antichain}
\propositionantichain
\end{proposition}
\begin{qproof}
  We need to show that a proper tree decomposition cannot have two
  bags with one contained in the other.  Assume, by way of
  contradiction, that $d$ is a proper tree decomposition of $g$ with two bags
  $B,C \in \bags(d)$ where $B \subseteq C$.  Let $A$ be the second bag
  in the path from $B$ to $C$. Since $d$ is a tree decomposition and
  $A$ is on the path from $B$ to $C$, we get that $B=B \cap C
  \subseteq A$.

  Define $d'$ to be the graph obtained from $d$ by removing $B$ and
  connecting $A$ to all other neighbors of $B$, as illustrated in
  Figure~\ref{fig:antichain}. We will show that $d'$ is a tree
  decomposition for $g$.  The first two properties of the tree
  decomposition still hold because $A$ contains $B$. Consider the path
  between two bags $\alpha$ and $\beta$ of $d'$. If the path between
  them is the same as in $d$, the third property still holds. If it
  changed, then the path used to go through $B$, and the only new bag
  that may appear in this path is $A$. In this case, $\alpha \cap
  \beta \subseteq B \subseteq A$, and the third property holds as
  well.  We have found a tree decomposition $d'$ for $g$ that strictly
  subsumes $d$, hence $d$ is improper, and this is a contradiction.
\end{qproof}
 
From Theorem~\ref{theorem:Jordan}, the following easily follows.
\def\propositionsat{
  If $d$ is a tree decomposition of a graph $g$, then $\sat(g,d)$ is a
  triangulation of $g$.
}

\begin{proposition}\label{proposition:sat}
\propositionsat
\end{proposition}
\begin{proof}
  According to the definitions, $d$ is a tree decomposition of $\sat(g,d)$. Hence, since every
  bag of $d$ is a clique of $\sat(g,d)$, it follows from
  Theorem~\ref{theorem:Jordan} that $\sat(g,d)$ is chordal.
\end{proof}

The definition of $M$ is based on
Lemma~\ref{lemma:chordal-one-td}, stating that a chordal graph $g$ has a single
proper tree decomposition, up to the equivalence $\equivb$, with the
set of bags being precisely the set of maximal cliques.

\begin{lemma}\label{lemma:chordal-one-td}
If $g$ is a chordal graph and $d$ is a proper
  tree decomposition of $g$, then $\bags(d)=\maxcl(g)$.
\end{lemma}

\begin{proof}
  According to Proposition~\ref{proposition:cliquesInBags}, every
  clique of $g$ is contained in some bag of $d$, and according to
  Theorem~\ref{theorem:Jordan}, $g$ has some tree decomposition, say
  $d'$, where all the bags are cliques of $g$. So we have that
  $d'\sqsubseteq d$.  If $\bags(d)\not\subseteq\bags(d')$, then $d'$
  strictly subsumes $d$, in contradiction to the fact that $d$ is
  proper. Hence $\bags(d)\subseteq\bags(d')$, meaning that the bags of
  $d$ are cliques of $g$. It thus follows that every maximal clique is
  a bag of $d$, or in notation, $\maxcl(g)\subseteq\bags(d)$.
  Finally, Proposition~\ref{proposition:antichain} states that the
  bags of $d$ are an antichain w.r.t.~set inclusion, and hence,
  $\bags(d)\subseteq\maxcl(g)$. We conclude that $\bags(d)=\maxcl(g)$,
  as claimed.
\end{proof}

Based on Lemma~\ref{lemma:chordal-one-td}, we define $M$ to be the
function that maps every $h\in \mintri(g)$ to the equivalence class of
the proper tree decomposition of $h$. 
Lemma~\ref{lemma:proper-correct} states that $M$ has the required properties.

\begin{lemma}\label{lemma:proper-correct}
 Let $g$ be a graph. The mapping $M$ is a bijection between
  $\mintri(g)$ and the equivalence classes of $\equivb$ over the
  proper tree decompositions of $g$.
\end{lemma}

\begin{proof}
  We show that $M$ has the correct range, that it is surjective,
  and that it is bijective.

  \partitle{$M$ has a proper range}  
  Let $h$ be a minimal triangulation of $g$, and let $d$ be a proper
  tree decomposition of $h$ in $M(h)$.  Then $d$ is also a tree
  decomposition of $g$, as the three properties of a tree
  decomposition still hold. We need to show that $d$ is a \e{proper}
  tree decomposition of $g$. According to
  Lemma~\ref{lemma:chordal-one-td}, we have $\bags(d)=\maxcl(h)$, and
  therefore, $\sat(g,d)=h$.  Assume, by way of contradiction, that $d$
  is improper. Then $d$ is strictly subsumed by some tree
  decomposition $d'$ of $g$, meaning that $d'\sqsubseteq d$.  Let $h'$ be the graph
  $\sat(g,d')$. From Proposition~\ref{proposition:sat} it follows that
  $h'$ is a triangulation of $g$. From $d'\sqsubseteq d$ and the fact
  that every bag of $d$ is a clique of $h$, we conclude that
  $\edges(h')\subseteq\edges(h)$. And since $h$ is a minimal
  triangulation, we get that $h=h'$. We can now conclude that $d'$ is
  also a tree decomposition of $g$: the junction-tree property holds
  and the nodes are covered since it is a tree decomposition of $g$,
  and the edges are covered since those are the edges of $h'$ that are
  covered by its definition.  We get that both $d$ and $d'$ are tree
  decompositions of $h$, and $d$ is strictly subsumed by $d'$, which
  contradicts the fact that $d$ is a proper tree decomposition of $h$.
  
\partitle{$M$ is injective}
Let $h_1$ and $h_2$ be two minimal triangulations such that
$h_1 \neq h_2$. Without loss
of generality, assume that the edge $\set{u,v}$ is in $h_1$ but not in
$h_2$. The nodes $u$ and $v$ are part of some maximal clique of $h_1$,
so they share a bag in $M(h_1)$. But they are not part of any clique
of $h_2$, so they do not share any bag in $M(h_2)$. Therefore,
$M(h_1) \neq M(h_2)$.

\partitle{$M$ is surjective} 
Given a proper tree decomposition $d$ of $g$, we need to show that
there exists a minimal triangulation $h$ of $g$ such that $d\in M(h)$.
Consider the graph $h=\sat(g,d)$. We will show that $h$ is a minimal
triangulation, and that $d$ belongs to $M(h)$.

We first show that $h$ is a minimal triangulation of $g$.  According
to Proposition~\ref{proposition:sat}, $h$ is a triangulation of $g$.
Assume, by way of contradiction, that $h$ is not minimal.  Then there
exists a minimal triangulation $h'$ of $g$ that is obtained from $h$
by removing some edges; denote one of these edges by $e$.  Consider a
tree decomposition $d'\in M(h')$. The clique containing $e$ in $h$ is
not a clique in $h'$, and therefore $\bags(d)\not\subseteq\bags(d')$.
Also note that since $h' \subseteq h$, every maximal clique of $h'$ is
contained in some maximal clique of $h$, and therefore
$d'\sqsubseteq d$.  Then $d'$ strictly subsumes $d$, in contradiction
to the fact that $d$ is proper.

Finally, we need to show that $d$ is a proper tree decomposition of
$h$. The nodes of $h$ are covered in $d$, and the junction-tree
property holds, since $d$ is a tree decomposition of $g$. The new
edges of $h$ are covered in $d$ since they are all a result of
saturation of the bags of $d$. So $d$ is a tree decomposition of $h$,
and we claim that it is proper. Assume, by way of contradiction, that
$d$ is not a proper tree decomposition of $h$, then the tree
decomposition $d'$ that strictly subsumes it is also a tree
decomposition for $g$, contradicting the fact that $d$ is a proper
tree decomposition of $g$.
\end{proof}

To complete the proof of Theorem~\ref{thm:proper-mint}, we explain how
the proper tree decompositions in the class $M(h)$ can be enumerated
with polynomial delay for $h\in\mintri(g)$. Jordan~\cite{Jordan} shows
that, given a chordal graph $h$, a tree over the bags that represent
the maximal cliques of $h$ is a tree decomposition if and only if it
is a maximal spanning tree, where the weight of an edge between two
bags is the size of their intersection. Hence, this enumeration
problem is reduced to enumerating all maximal spanning trees, which
can be solved in polynomial
delay~\cite{DBLP:journals/ijcm/YamadaKW10}.  Since Gavril~\cite{GAVR}
showed that in chordal graphs the number of maximal cliques of $h$
is at most the number of nodes of $h$, we have a polynomial delay
algorithm for enumerating the tree decompositions.  This concludes the
proof.

\addedn{According to Corollary~\ref{cor:proper-td}, we can enumerate all proper tree decompositions with incremental polynomial time. Note that this section also implies another alternative: we can enumerate only one representative of every equivalence class with the same complexity guarantees. That is, we can enumerate one proper tree-decomposition of each possible bag configuration with incremental polynomial time. The choice of which variation to use depends on the application at hand. For some applications, different tree-decompositions with the same bags may be of different quality, while for others only the bags matter.}

\section{Experimental Evaluation}\label{sec:exper}

We now describe an experimental study over an implementation of our
enumeration algorithm for minimal triangulations, namely the algorithm
\algname{EnumMIS} of Figure~\ref{alg:EnumMaxIndependent} for the SGR
$(\G\ms,\Av\ms,\Ae\ms)$, calling the procedure $\Extend$ of
Figure~\ref{alg:BBExtend}. The goal of the experimental study is
twofold. First, we wish to understand how practical the execution cost
of the algorithm is for enumerating many minimal triangulations (and
tree decompositions). Second, we wish to study the ability of the
algorithm to produce many \e{high-quality} triangulations, given an
underlying triangulation algorithm (for $\Extend$), and even to
improve upon standard quality measures of the underlying algorithm
itself. In Section~\ref{sec:ExperimentalSetup} we describe the
experimental setup, in Section~\ref{sec:Delay} we report on the
efficiency of the algorithm in terms of its delay, and in
Sections~\ref{sec:Quality} and~\ref{sec:testcase} we study the quality
of the results.

\definecolor{ObjectDetection}{RGB}{91,155,213}
\definecolor{Segmentation}{RGB}{237,125,49}
\definecolor{Pedigree}{RGB}{165,165,165}
\definecolor{Grids}{RGB}{255,192,0}
\definecolor{Promedas}{RGB}{68,114,196}
\definecolor{CSP}{RGB}{112,173,71}

\subsection{Experimental Setup}\label{sec:ExperimentalSetup}
We first describe the general setup for our study.

\subsubsection{Implementation and Hardware}
We implemented all algorithms in C++, with STL data
structures.\footnote{The code is available online:
  \url{https://github.com/NofarCarmeli/MinTriangulationsEnumeration}}
All experiments were carried out on a 2.6GHz dual-core laptop with 8GB
of RAM running Windows 10 professional.

\subsubsection{Triangulation Algorithms}
We implemented two well known triangulation algorithms as the
procedure $\trihrs$ in line~\ref{algline:triHrs} of the procedure
$\Extend$ (Figure~\ref{alg:BBExtend}). Both algorithms apply the
general technique of \e{node-elimination
  ordering}~\cite{OHTSUKI1976622}, where nodes are eliminated from the
graph in turn, by adding a subset of fill edges between the eliminated
node and its neighbors in the (leftover) graph.  Both algorithms
guarantee a minimal triangulation (hence there was no need to call
$\algname{MinTriSandwich}(g_{[\varphi]},g_t)$ in
line~\ref{algline:sandwich} of $\Extend$).
\begin{itemize}
\item $\MSCM$~\cite{Berry:2002:MCS:647683.732496}. This is an
  extension of the \e{Maximum Cardinality Search} (MCS) algorithm for
  recognizing chordal graphs~\cite{Tarjan:1984:SLA:1169.1179}, which
  finds a minimal elimination ordering along with its corresponding
  minimal triangulation.
\item $\LBTRI$~\cite{BERRY200633}. This algorithms guarantees
  minimality of the triangulation by adding only a subset of the fill
  edges at each of the elimination steps, and allows for complete
  flexibility in determining the elimination order. We applied the
  \e{min fill} heuristic that selects, at each iteration, the node
  whose elimination results in the smallest number of edges to add.
\end{itemize}

\subsubsection{Datasets}
We used three types of datasets: probabilistic graphical models,
database queries, and random (synthetic) graphs.  For the first type,
we used the following benchmark networks from the UAI probabilistic
inference
challenge.\footnote{\small\url{http://www.cs.huji.ac.il/project/PASCAL/showNet.php}}
The datasets Alchemy and DBN from the challenge are not described here
as each of their graphs had only one or two minimal triangulations,
and the enumeration ended instantaneously.
\begin{itemize}
\item \textbf{Promedas}: ``PRObabilistic MEdical Diagnostic Advisory
  System.'' The Promedas Markov networks represent medical diagnosis
  cases, and consist of binary variables that were converted from
  layered noisy-or Bayesian networks. The dataset includes $33$ graphs
  with $26$-$1039$ nodes and $36$-$1696$ edges, and many of them are considered too
  difficult for exact
  inference.\footnote{\small\url{http://graphmod.ics.uci.edu/uai08/Evaluation/Report/Benchmarks}}
\item \textbf{Object detection}: Markov Random Fields for
  object-detection tasks in computer vision. It includes $79$
  instances of connected networks, each containing $60$ nodes and
  between $135$ to $180$ edges.
\item \textbf{Image segmentation}: Bayesian networks generated from
  image-segmentation tasks. It includes $6$ graphs with $226$-$235$
  nodes and $617$-$647$ edges.
\item \textbf{Grids}: An $N\times N$ grid network. Such networks that
  are common in image processing~\cite{Blake:2011:MRF:2024611}, and
  networks that model problems such as medical diagnosis and object
  detection. This dataset includes $8$ grids with $N =10$ and $N=20$, resulting in graphs with $100$ or $400$ nodes, and $180$-$760$ edges.
\item \textbf{Pedigree}: Bayesian networks used to model genetic
  information~\cite{FishelsonG02}. The data set includes $3$ graphs,
  each has $385$ nodes and $930$ edges.
\item \textbf{CSP}: Constraint-satisfaction problems. There are $3$
  instances in the dataset, with $67$-$100$ nodes and $226$-$619$ edges.
\end{itemize}

The datasets of second and third types are as follows.
\begin{itemize}
\item \textbf{TPC-H}: Graphs induced from TPC-H. These are the Gaifman
  (primal) graphs of joins for implementing the TPC-H benchmark
  queries in LogiQL, the Datalog variant of
  LogicBlox~\cite{DBLP:conf/sigmod/ArefCGKOPVW15}.\footnote{The
    queries, provided to us by LogicBlox, are used for benchmarking
    the engine.}  The queries include up to 22 nodes, and up to 46
  edges, and their treewidth is up to 7.
\item \textbf{Random}: Random $G(n,p)$ graphs in the Erd\H{o}s-R\'enyi
  model. The number of nodes is $n$ and every pair of nodes is
  connected by an edge with probability $p$ (independently).  We
  generated $54$ random graphs for varying $n$ between $30$ and $200$,
  and three values of $p$: $0.3$ (sparsest), $0.5$ and $0.7$
  (densest).
\end{itemize}

As a baseline approach, we implemented the algorithm of
DunceCap~\cite{DBLP:conf/sigmod/TuR15} for generating all of the
generalized hypertree decompositions (each involving an underlying
tree decomposition). However, this algorithm is designed to handle
small join queries and to span a much greater space of objects
(namely, the generalized hypertree decompositions). In particular, on
the TPC-H dataset we observed that on the smaller queries our
algorithm is faster by 3 to 4 orders of magnitude, and on some of the
larger queries (\textsf{Q7} and \textsf{Q9}) we could not get their
algorithm to terminate in less than two hours (while our algorithms
terminated in a few seconds, as we later discuss). Therefore, we
decided to exclude comparisons to this implementation.  As of today,
we are not aware of any other published algorithms for enumerating
(minimal) triangulations or tree decompositions with guarantees of
correctness (completeness).

\subsection{Execution Cost}\label{sec:Delay}
In what follows we report on the delay of the two variants of the
implementation, corresponding to the two triangulation algorithms
$\LBTRI$ and $\MSCM$.

\subsubsection{Probabilistic graphical models}
We measured the average delay between minimal triangulation printouts
for the network datasets from the UAI challenge. The measurements were
conducted during $30$ minutes executions. $5$ of the graphs in Promedas, and one graph of CSP completed the enumeration within this time. We plotted the delay of the other graphs against
the number of their edges. The plots, corresponding to the two
minimal triangulation algorithms $\LBTRI$ and $\MSCM$, are presented
in Figures~\ref{fig:BNAvgDelayLB_Triang_Fill}
and~\ref{fig:BNAvgDelayMCS_M}, respectively, \addedn{using log-scale. %
Overall, we see that the delay increases with the size of the graph.}
However, this trend varies between the different benchmarks.
While this dependency is apparent for the Promedas data set, the average delay for object detection has little correlation with the number of edges in the graph.

\begin{figure*}[tbh]
	\label{fig:BNDelay}
	\centering
	\begin{subfigure}[b]{0.9\linewidth}
	{\includegraphics[width=0.9\textwidth]{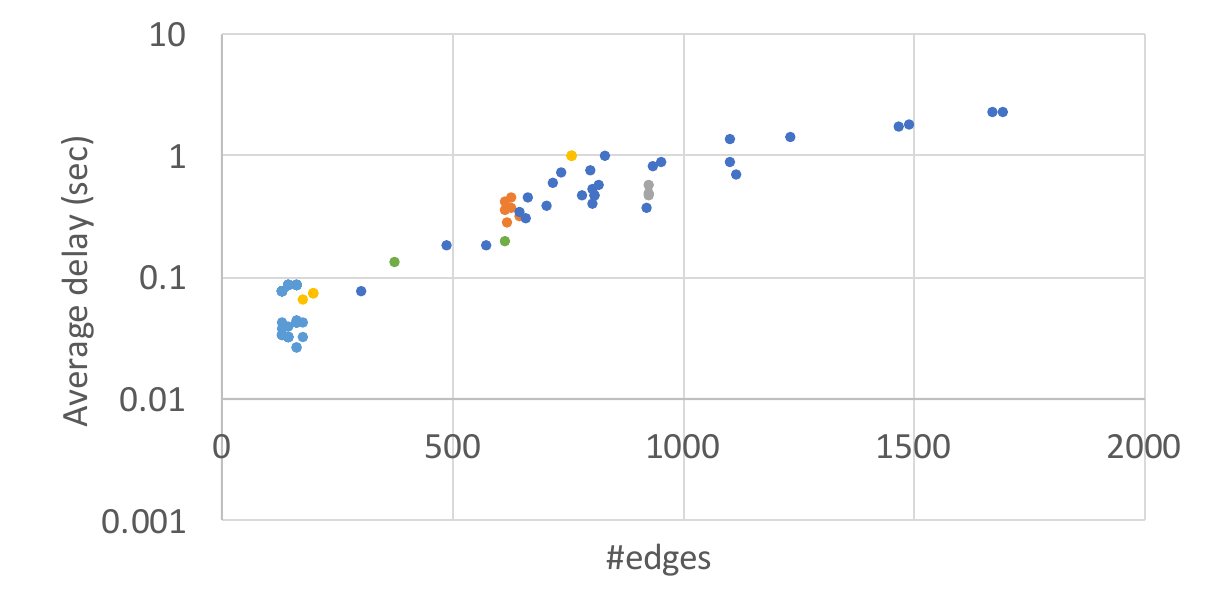}
	\caption{$\LBTRI$} \label{fig:BNAvgDelayLB_Triang_Fill}}
	\end{subfigure}
	\begin{subfigure}[b]{0.9\linewidth}
	{\includegraphics[width=0.9\textwidth]{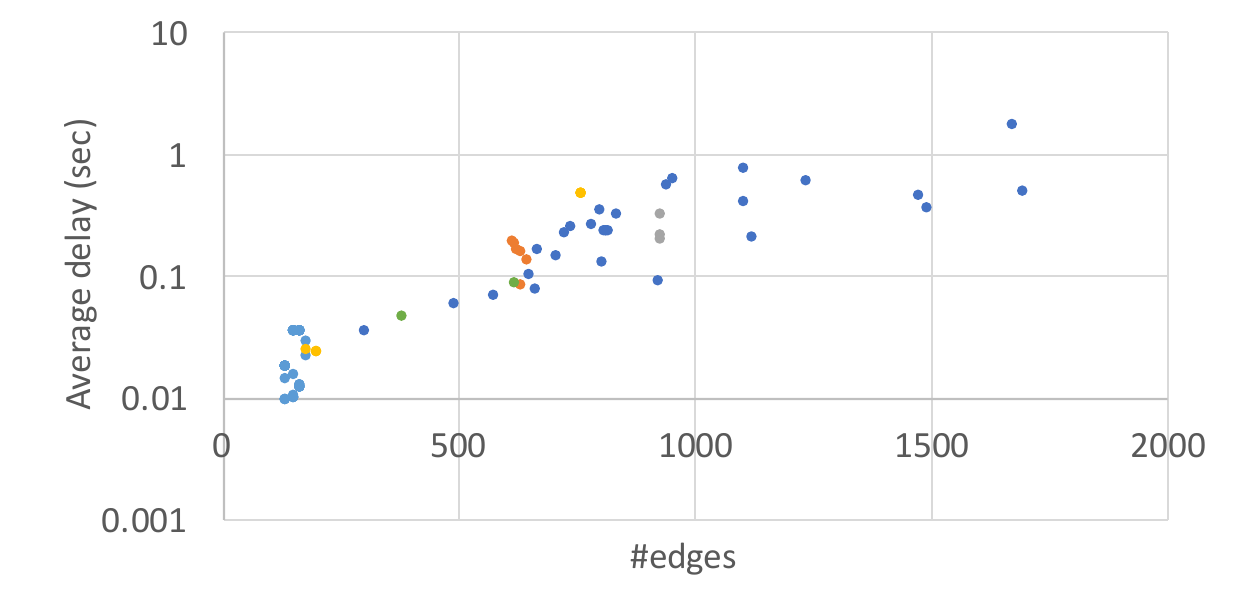}
	\caption{$\MSCM$}\label{fig:BNAvgDelayMCS_M}} 
	\end{subfigure}
	\caption{Average delay (in seconds) for the two triangulation algorithms
          over the probabilistic-graphical-model benchmarks: 
Object Detection ({\color{ObjectDetection}{$\bullet$}}),
Segmentation ({\color{Segmentation}{$\bullet$}}),
Pedigree ({\color{Pedigree}{$\bullet$}}),
 Grids ({\color{Grids}{$\bullet$}}),
{Promedas} ({\color{Promedas}{$\bullet$}}),
{CSP} ({\color{CSP}{$\bullet$}})
}
\end{figure*}

\subsubsection{Random graphs}
We measured the average delay (in seconds) between the printout of consecutive minimal
triangulations during a 30 minute execution.  The plots in
Figures~\ref{fig:RandomAvgDelay30minLB_TRIANG}
and~\ref{fig:RandomAvgDelay30minMCS} show the average delay vs.~the
size of the graph for the two variants.  We can see that the delay
increases with the size of the graph, and that the general trend is that the
delay is larger for denser graphs.  We also see that for $\LBTRI$ the
delay is generally longer than for $\MSCM$.

\begin{figure*}[tbh]
	\label{fig:RandomDelay}
	\centering
	\begin{subfigure}[b]{0.9\linewidth}{\includegraphics[width=0.9\textwidth]{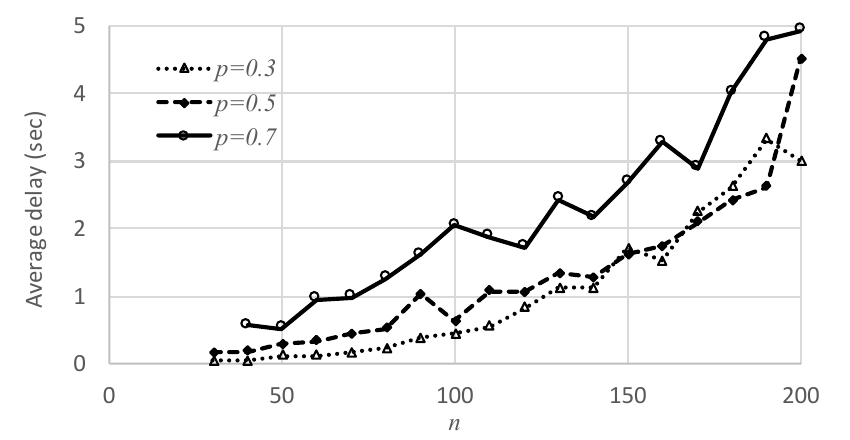}
	\caption{$\LBTRI$} \label{fig:RandomAvgDelay30minLB_TRIANG}}
	\end{subfigure}
	\begin{subfigure}[b]{0.9\linewidth}
	{\includegraphics[width=0.9\textwidth]{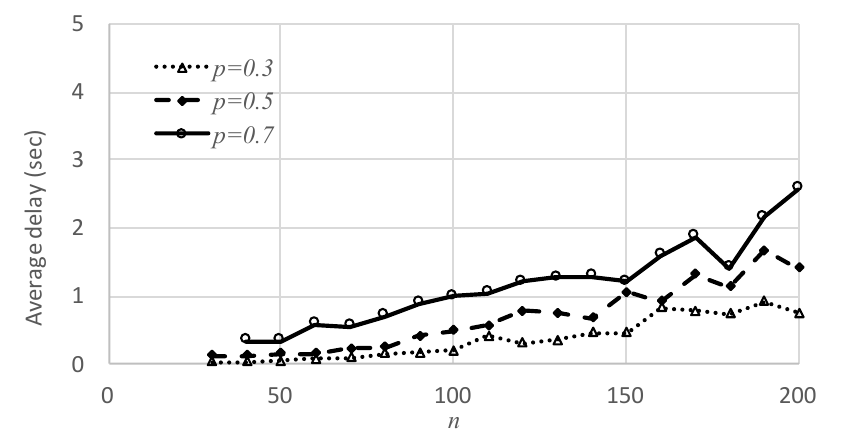} 
	\caption{$\MSCM$} \label{fig:RandomAvgDelay30minMCS}}
	\end{subfigure} 
	\caption{Average delay over 54 graphs randomly generated from
		the Erd\H{o}s-R\'enyi $G(n,p)$ for varying $n$ and $p$}
\end{figure*}

\subsubsection{Database queries}
We evaluated our enumeration algorithm over a set of $22$ queries from
the TPC-H dataset.  The graphs of these queries are quite small when
compared to the UAI datasets ($<23$ nodes).  Moreover, half of these
graphs are chordal to begin with (i.e., have only one minimal
triangulation---the graph itself), and hence, irrelevant for us.
Except for two queries, all of the rest had at most $5$ minimal
triangulations. The remaining two queries are \textsf{Q7} (Volume
shipping Query) and \textsf{Q9} (Product Type Profit Measure Query),
and they have a considerable number of minimal triangulations: $700$
and $588$, respectively.  When considering the minimum-width tree
decomposition for each of the queries, the largest bag was of size
$8$; this is due to a relation of arity $8$ in the query. In fact the
largest bag in each of the queries had at most two variables more than
the size of the largest relation.  The execution for all $22$ queries
completed within $5$ seconds.

In one of the queries we compared the delays for two modes of
printing: the one of $\algname{EnumMIS}$ and the one of
$\algname{EnumMISHold}$ that prints upon extraction from the queue, as described in
Section~\ref{sec:sgr-correct}. We refer to the former as \textsf{UG}
(Upon Generation) and to the latter as \textsf{UP} (Upon Pop).  Recall
that both modes guarantee incremental polynomial time
(Theorem~\ref{thm:inc-implies-inc}). 
\addedn{This gives us a sense of the practical impact of printing the solutions as soon as possible compared to holding the solutions to attain an easy-to-prove incremental polynomial time algorithm.}
The results are in
Figure~\ref{fig:TPCHDelay}. While the dotted line (of \textsf{UG}) has
bursts of high-frequency prints followed by periods where no new
triangulation is created, the solid line (\textsf{UP}) has a more
steady pace as can be seen by the constant slope in
Figure~\ref{fig:TPCHDelay}. As expected, despite the fact that the
last result of \textsf{UG} is printed earlier that of \textsf{UP},
termination is at the same time in both modes, as the algorithm still
needs to check that there are no additional minimal triangulations.

\begin{figure}[tbh]	
	\centering
	\includegraphics[width=0.9\textwidth]{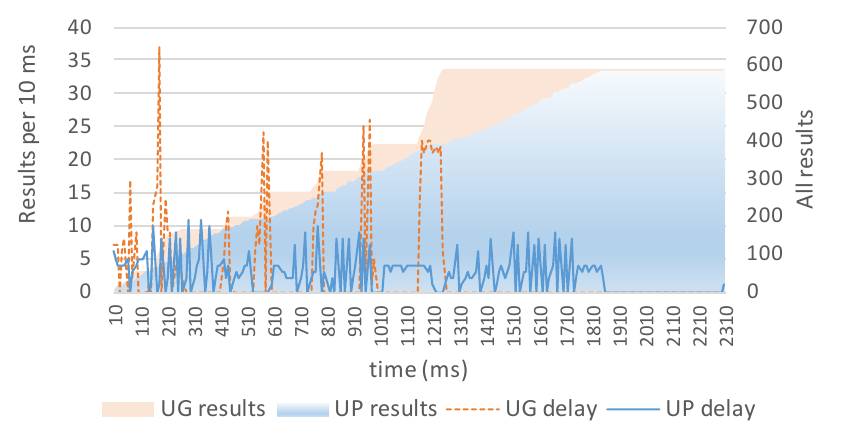} 
	\caption{Delay behavior in two printing modes:
		\textsf{UG} (Upon Generation, as in
		$\algname{EnumMIS}$), and \textsf{UP} (Upon Pop, as in
		$\algname{EnumMISHold}$)}
	\label{fig:TPCHDelay}
\end{figure}

\subsection{Quality}\label{sec:Quality}

In what follows we report on the quality of the generated minimal
triangulations in terms of two standard measures of quality for
triangulations and tree decompositions: \e{fill} and
\e{width}. \e{Fill} refers to the total number of edges added in order
to make the resulting graph chordal, while \e{width} refers to the
size of the largest clique in the generated
triangulation (minus one).\footnote{Recall that is NP-hard to find a triangulation
  that minimizes the fill~\cite{doi:10.1137/0602010} or the
  width~\cite{arnborg1987complexity}.}
The natural benchmark for the quality of the triangulations is the first result our enumeration returns, as it is the result we would get by running the minimal triangulation algorithm we used, on the original
input graph.

For each graph of the probabilistic inference dataset, we executed the enumeration algorithm for 30 minutes. The results in Table~\ref{tab:width} include only the experiments where the enumeration did not complete.
For each graph we measured the following: 
the number of generated triangulations (\textbf{\#trng}), 
the minimum observed width over all printed triangulations
(\textbf{min-w}), 
the number of printed triangulations of width at most that of the first
(\textbf{$\#{\leq}\textbf{w}_1$}), 
the average reduction in width (over the dataset) and the maximum
improvement in parentheses (\textbf{$\textbf{w}{\downarrow}$ $(\%)$}).
In Table~\ref{tab:fill} we show the same results for fill instead of width (\textbf{min-f},
\textbf{$\#{\leq}\textbf{f}_1$} and \textbf{$\textbf{f}{\downarrow}$
  $(\%)$}).

{
{
	\begin{table}[p]\small
		\renewcommand{\arraystretch}{1.2}
		\begin{center}	
			\begin{tabular}{|c|c|c|c|c|}
				\hline
				\textbf{Dataset} & 
				\textbf{$\#$trng} & \textbf{min-w} & \textbf{$\#{\leq}\textbf{w}_1$ $(\%)$} & \textbf{$\%\textbf{w}{\downarrow}$ $(\max)$} 
				\\
				\hline\hline
				\multicolumn{5}{|c|}{$\MSCM$}\\\hline
				Promedas $(28)$& $11064.5$ & $25.8$ & $3713.6$ ($33.6\%$) &  $2.2$ ($15.2$) \\
				\hline
				Grids $(8)$ & $40319.8$ & $18.4$ & $93.6~(0.2\%)$ &  $0.0$ ($0.0$) \\
				\hline
				Obj.~Detection $(79)$ & $100349.9$ & $6.1$ & $42743.9$ ($42.6\%$)& $0.4$ ($12.5$)\\
				\hline                  
				Segmentation $(5)$& $12836.5$ & $23.0$ & $20.5~(0.2\%)$  & $0.0$ ($0.0$) \\
				\hline
				Pedigree $(3)$& $7789.0$ & $31.7$ & $3087.3~(39.6\%)$ &$0.0$ ($0.0$)\\
				\hline
				CSP $(2)$& $29450.5$ & $16.5$ & $26741.5$ ($90.8\%$) & $13.2$ ($26.3$)\\
				\hline
				\multicolumn{5}{|c|}{$\LBTRI$}\\\hline
				Promedas $(28)$& $4220.7$ & $18.6$ & $2352.0$ ($55.7\%$) & $1.9$ ($16.7$)\\
				\hline
				Grids $(8)$ & $13881.3$ & $24.5$ & $1273.0$ ($9.2\%$) & $3.0$ ($8.7$) \\
				\hline
				Obj.~Detection $(79)$ & $33295.4$ & $5.8$ & $15709.3~(47.2\%)$ & $0.0$ ($0.0$)\\
				\hline
				Segmentation $(5)$& $5174.2$ & $21.8$ & $2141.8$ ($41.4\%$) & $10.3$ ($20.7$) \\
				\hline
				Pedigree $(3)$& $3646.0$ & $23.7$ & $3227.7$ ($88.5\%$) & $5.3$ ($14.8$) \\
				\hline
				CSP $(2)$& $11772.0$ & $16.5$ & $3760.5~(31.9\%)$ & $0.0$ ($0.0$)\\
				\hline
			\end{tabular}		
		\end{center}	
		\caption{Width statistics on generated triangulations following 30 minutes execution\label{tab:width}}
	\end{table}
}

{
	\begin{table}[p]\small
		\renewcommand{\arraystretch}{1.2}
		\begin{center}	
			\begin{tabular}{|c|c|c|c|c|}
				\hline
				\textbf{Dataset} & 
				\textbf{$\#$trng} & \textbf{min-f} & \textbf{$\#{\leq}\textbf{f}_1$ $(\%)$}& \textbf{$\%\textbf{f}{\downarrow}$ $(\max)$}
				\\
				\hline\hline
				\multicolumn{5}{|c|}{$\MSCM$}\\\hline
				Promedas $(28)$& $11064.5$ & $3353.4$ & $8136.0~(73.5\%)$ & $18.1$ ($49.9$) \\
				\hline
				Grids $(8)$ & $40319.8$ & $2752.6$ & $15771.4~(39.1\%)$ & $4.2$ ($28.1$)\\
				\hline
				Obj.~Detection $(79)$ & $100349.9$ & $30.0$ & $27614.1~(27.5\%)$  & $19.9$ ($47.1$)\\
				\hline                  
				Segmentation $(5)$& $12836.5$ & $2555.2$ & $5269.7~(41.1\%)$& $5.9$ ($12.5$) \\
				\hline
				Pedigree $(3)$& $7789.0$ & $3525.7$ & $743.0~(9.5\%)$ & $2.8$ ($3.5$) \\
				\hline
				CSP $(2)$& $29450.5$ & $46.0$ & $18815.5~(63.9\%)$  &  $35.2$ ($55.8$)\\
				\hline
				\multicolumn{5}{|c|}{$\LBTRI$}\\\hline
				Promedas $(28)$& $4220.7$ & $1239.4$ & $175.0~(4.1\%)$  &  $0.2$ ($11.1$)\\
				\hline
				Grids $(8)$ & $13881.3$ & $1600.3$ & $1.0~(0.0\%)$ & $0.0$ ($0.0$) \\
				\hline
				Obj.~Detection $(79)$ & $33295.4$ & $27.6$ & $5110.7~(15.3\%)$  & $10.4$ ($27.6$)\\
				\hline
				Segmentation $(5)$& $5174.2$ & $1402.0$ & $130.2~(2.5\%)$ & $1.2$ ($4.2$) \\
				\hline
				Pedigree $(3)$& $3646.0$ & $1491.0$ & $1.0~(0.0\%)$ & $0.0$ ($0.0$) \\
				\hline
				CSP $(2)$& $11772.0$ & $34.5$ & $664.0~(5.6\%)$ & $1.4$ ($3.0$) \\
				\hline
			\end{tabular}		
		\end{center}	
		\caption{Fill statistics on generated triangulations following 30 minutes execution\label{tab:fill}}
	\end{table}
}}

We can see that the algorithm, in both variants, is able to generate a
significant number of triangulations of high quality, in terms of both
width and fill. Moreover, it amplifies the quality of the underlying
triangulation, by means of width, and much more by means of fill.
According to the number of triangulations printed, $\MSCM$ enables
generating twice as many triangulations as $\LBTRI$.  However, with the exception of only a handful of the graphs tested, the
triangulations generated by $\LBTRI$ are superior in both the width
and fill metrics (this is especially apparent for the Promedas and
Pedigree datasets). Furthermore, this set of superior triangulations
accounts for a larger portion of the total number of triangulations
that generated.

\begin{figure}[t]
  \centering \includegraphics[width=0.9\textwidth]{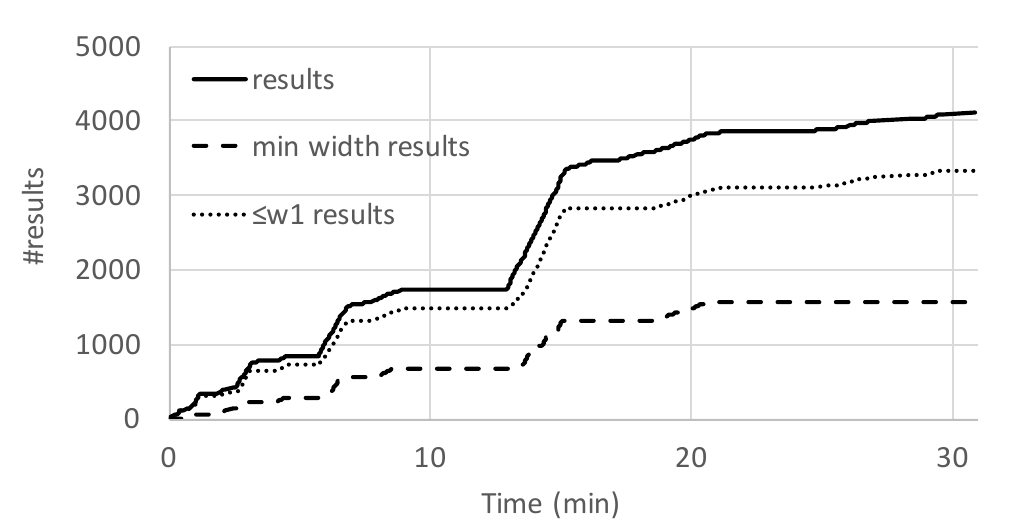} 
  \caption{Cumulative number of
    triangulations\label{fig:SingleExecutionCumulitiveResult}}
\end{figure}

\subsection{Case Study}\label{sec:testcase}
In this section we take a closer look at the behaviour of the
enumeration during a single execution. We use a graph from the
Promedas dataset. In Figure~\ref{fig:SingleExecutionCumulitiveResult}
we show the cumulative number of results generated over time. We
consider three types of results: \e{(a)} all minimal triangulations,
\e{(b)} minimal triangulations of the minimum width (where the minimum
is taken over the printed triangulations), \e{and (c)} those with a
width at most that of the first triangulation (${\leq}$w1), which is
the one that we would obtain by using only the triangulation algorithm
at hand. The reduction in the number of new triangulations over time
is consistent with the increase in the delay entailed in the guarantee
of incremental polynomial time, rather than polynomial delay.

Figure~\ref{fig:SingleExecutionMinFillWidth} presents the reduction in
the minimum width and minimum fill obtained during the execution of
the algorithm.  Each time slice records the minimum width (solid
curve) and minimum fill (dotted curve) observed up to that time. We
can see that both the width and the fill reduce over time, but the
minimum observed width is reached very quickly, while attaining the
minimum observed fill takes longer.

\begin{figure}
	\centering
	\includegraphics[width=0.9\textwidth]{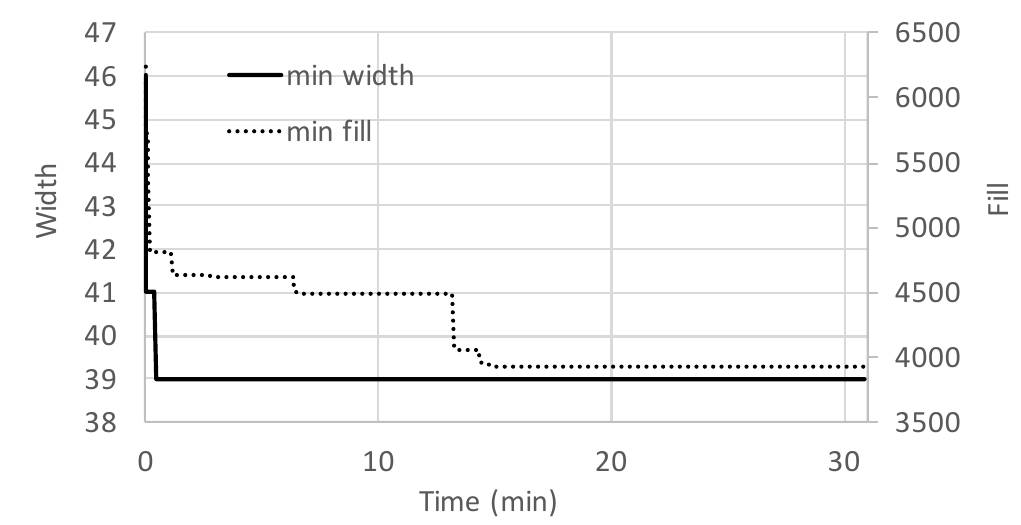}
	\caption{Minimum width and fill over time\label{fig:SingleExecutionMinFillWidth}}        
\end{figure}

\section{Concluding Remarks}\label{sec:conclusions}
We introduced the concept of a succinct graph representation (SGR),
and presented an enumeration algorithm for its maximal independent
sets. The algorithm enumerates in incremental polynomial time under
complexity assumptions: the SGR is tractably accessible, and it has a
tractable expansion. Consequently, we established an algorithm for
enumerating the minimal triangulations of a graph by reducing the
problem to the enumeration of the maximal independent sets of an SGR,
and showing that the complexity assumptions hold. We also proved that
enumerating the minimal triangulations enables the enumeration of the
proper tree decompositions. Our experimental study showed that the
algorithm is effective on graphs of various domains, and is able to
enhance off-the-shelf algorithms for triangulation (or tree
decomposition) by generating many (rather than just one) high-quality
different triangulations, and even improve standard quality measures
such as width and fill.

This work opens up quite a few directions for future work. On the
theoretical side, it is left open whether the enumeration of the
minimal triangulations can be carried out with polynomial delay. As discussed in Section~\ref{sec:poly} and Section~\ref{sec:sgrpdelay}, it is not clear how to do this with known techniques, and the abstraction used here cannot achieve this time bound.
Polynomial delay is possible in the case that the number of minimal separators of the input graph is polynomial in the size of the input graph.
In a follow up work to the originial publication of this manuscript, Ravid et al.~\cite{Ravid:2019} showed how to perform in such cases ranked enumeration under a wide class of cost functions that generalizes width and fill-in.
If the number of minimal separations is not bounded, the question of incorporating some order remains open.
\addedn{In terms of the space complexity, it is left open to determine whether the problem can be solved using polynomial (or even sub-exponential) space; the algorithm presented here uses exponential space in the worst case as it stores the minimal separators of the graph and the generated results.}
From the practical aspect, the algorithm presented here holds many opportunities for optimization over real-life graphs. An optimized version of the code is available online.\footnote{\url{https://github.com/TechnionTDK/efficient-td-enum}}

\subsection*{Acknowledgements}
This work was supported in part by the US-Israel Binational
Science Foundation (BSF) Grant No.~2014391, the Israel Science
Foundation (ISF) Grant No.~1295/15, and the Austrian Science Fund (FWF):
P30930-N35, W1255-N23. Benny Kimelfeld is a Taub Fellow,
supported by the Taub Foundation.
The authors are greatly thankful to LogicBlox, and in particular Hung
Ngo, for insightful discussions and for providing test
data.
 
\bibliography{main}
\bibliographystyle{abbrv}

\end{document}